\newcommand{\R}{\mathbb{R}}
\newcommand{\ZZ}{\mathbb{Z}}
\newtheorem{thm}{Theorem}
\newtheorem{cor}[thm]{Corollary}
\newtheorem{lemma}[thm]{Lemma}
\newtheorem{prop}[thm]{Proposition}
\theoremstyle{definition}
\newtheorem{rem}[thm]{Remark}
\title{Approximation of eigenvalues of Schr\"odinger operators}
\author{Johannes F. Brasche, Robert Fulsche}
\pgfplotsset{compat=1.13}
\begin{document}
\maketitle
\begin{abstract}
It is known that convergence of l.s.b. closed symmetric sesquilinear forms implies norm resolvent convergence of the associated self-adjoint operators and this in turn
convergence of discrete spectra. In this paper in both cases sharp estimates for the rate of convergence are derived.
An algorithm for the numerical computation of eigenvalues of 
generalized Schr{\"o}dinger operators in $L^2(\R)$ is presented and illustrated by explicit examples; 
the mentioned general results on the rate of convergence are applied in order to
obtain error estimates for these computations. An extension of the results to Schr{\"o}dinger operators on metric graphs is sketched.  
\end{abstract}
\begin{flushleft}
\textbf{AMS subject classification:} Primary: 47B25,81Q10,81-08; Secondary: 34L15,65Z05 

\medskip
\textbf{Keywords:} Generalized Schr{\"o}dinger operators, $\delta$-interactions, eigenvalues
\end{flushleft}
\section{Introduction}

\noindent
Often things become easier by passing to the limit. A striking example is provided by the one-dimensional Kronig-Penney model. Here the Hamiltonian
is the self-adjoint operator $H_b= -\Delta + b\sum_{n\in \ZZ} a_n \delta_n$ in $L^2(\R)$. If all coefficients $a_n$ are positive, then, by Kato's
monotone convergence theorem, the operators $H_b$ converge in the strong resolvent sense to the Laplacian with Dirichlet boundary conditions at
every point of $\ZZ$, as the coupling parameter $b$ tends to infinity. If the family $(a_n)_{n\in \ZZ}$ is lower bounded by a positive
constant, then the operators $H_b$ converge in the norm resolvent sense with convergence rate $O(1/b)$ \cite[Example 3.8]{hk}. Moreover
if the family $(1/a_n)_{n\in \ZZ}$ is summable, then the resolvents even converge w.r.t. the trace norm with convergence rate $O(1/b)$
\cite[Theorem 3.1]{mfat}.
While it is difficult to investigate the operator $H_b$, it is trivial
to determine the spectral properties of the limit operator and results on the rate of convergence may be used
for a discussion of the Hamiltonian $H_b$. We refer to \cite{robin,hk,lms,jmaa,jfa,mfat,bd,gru} for a detailed analysis of large coupling convergence of regular and generalized Schr{\"o}dinger operators as well as other kinds of operators. 

\noindent
Another important class of examples is provided by Schr{\"o}dinger operators with singular potentials. If the potential $V$ vanishes outside
a very small vicinity of the closed null set $N$, then one expects that a replacement of the Hamiltonian $-\Delta +V$ by a suitably chosen
self-adjoint operator $H$ that coincides with $-\Delta$ on the space $C_0^{\infty}(\R^d\setminus N)$ only leads to a negligible error. Due to the fact that $H$ is equal to the free Hamiltonian outside the null set $N$ it is often easier to investigate
the operator $H$ than the regular Schr{\"o}dinger operator $-\Delta +V$. The idea that the mentioned replacement only leads to small errors
is supported by a large variety of convergence results. The most famous ones are results on point interactions; if the set $N$ is finite and $d=2,3$, then for every self-adjoint operator
$H$ in $L^2(\R^d)$ that coincides with $-\Delta$ on the space $C_0^{\infty}(\R^d\setminus N)$
one can give 
potentials $V_n$ such that the operators $-\Delta+V_n$ converge in the norm resolvent sense to $H$ and the supports of the potentials
$V_n$ shrink to the set $N$ \cite{Albeverio}. 

\noindent
Originally point interaction Hamiltonians have been introduced as an idealization of certain Schr{\"o}dinger operators with short range potential. 
In \cite{Brasche} one had the idea to use such operators for other purposes, too. One has shown that a large class of operators $H$ can be approximated
by point interaction Hamiltonians $H_n$. Since it is easy to compute the eigenvalues of the operators $H_n$, this fact may be used in order to 
compute the eigenvalues of $H$ approximately. Later a modification of the methods of \cite{Brasche} has led to an efficient algorithm for the 
numerical computation of eigenvalues of regular and generalized Schr{\"o}dinger operators in $L^2(\R^d)$, $d=2,3$ \cite{bo}. In the subsequent article 
\cite{o} these results have been extended to Schr{\"o}dinger operators with magnetic potentials. 

\noindent
In the present paper we
shall concentrate on the one-dimensional case where it is possible to obtain faster convergence with simpler algorithms.   
It was shown in \cite{Brasche} that weak convergence of measures implies norm resolvent convergence of the Schr\"odinger operators having these measures as potentials (cf. Theorem \ref{formconvergence} below). Based on this convergence result, we give a general construction for approximating finite signed measures $\mu$ by pure point measures $\mu_n$ such that the operators $-\Delta + \mu_n$ converge to $-\Delta + \mu$. Since norm resolvent convergence implies convergence of the isolated eigenvalues and the eigenvalues of $-\Delta + \mu_n$ can be computed efficiently, we end up with a general method for approximating negative eigenvalues of Schr\"odinger operators with integrable potential on $\mathbb{R}$. We provide error estimates for each step of the construction and proof.\\
This paper is organized as follows: In section 2 we recall some notation and preliminary results. Section 3 and 4 provide very general results:
As it is well known convergence of l.s.b. closed symmetric sesquilinear forms implies convergence of the associated self-adjoint operators in the norm resolvent sense.
In section 3 we derive an asymptotically sharp estimate on the rate of convergence. 
Norm resolvent convergence of self-adjoint operators implies convergence of the points in the discrete spectra. In
 section 4 we quantify this result. Section 5 gives results on the convergence of the sesquilinear forms for our concrete problem. In section 6 we discuss the actual construction of the approximating measures, including error estimates in terms of the Fourier transform. Section 7 shows an efficient way to compute negative eigenvalues of Schr\"odinger operators with pure point potentials. In section 8 we added a short discussion about existence results of negative eigenvalues for Schr\"odinger operators. Section 9 demonstrates our algorithm by two examples. In one of the two examples we deal with a singularly continuous measure potential, which cannot be treated by the classical approximation methods. Finally, in section 10 we discuss how our method may be extended to Schr\"odinger operators on other 1-dimensional domains and explicitly show this for operators on $[0, \infty)$.
\section{Preliminaries}
We will always denote by $\mathcal{H}$ a Hilbert space with inner product $\langle \cdot, \cdot \rangle$, which is linear in the first and antilinear in the second component. For a sesquilinear form $a$ on $\mathcal{H}$ with domain $D(a)$ and for $c \in \mathbb{R}$ we denote by $a_c$ the form
\begin{align*}
D(a_c) &= D(a)\\
a_c(f,g) &= a(f,g) + c\langle f, g\rangle.
\end{align*}
If $c$ is a lower bound of $a$, then $a_{1-c}$ is an inner product on $D(a)$. Further, for $f \in D(a)$ we sometimes denote
\begin{align*}
a[f] := a(f,f).
\end{align*}
The well known Sobolev inequality for the Sobolev space $H^1(\mathbb{R})$ states that for each $\varepsilon > 0$ and each $f \in H^1(\mathbb{R})$
\begin{align}\label{sobolevineq}
\| f\|_\infty^2 \leq \varepsilon \| f'\|_{L^2}^2 + \frac{4}{\varepsilon} \| f\|_{L^2}^2.
\end{align}
Let $\mu$ be a finite Radon measure on $\mathbb{R}$, i.e. a finite signed measure on the Borel-$\sigma$-algebra of $\mathbb{R}$, $\mathcal{B}(\mathbb{R})$. Then we define the sesquilinear form $a_\mu$ by
\begin{align*}
D(a_\mu) &= H^1(\mathbb{R})\\
a_\mu(f,g) &= \int_\mathbb{R}f'(x) \overline{g'(x)}dx + \int_\mathbb{R} f \overline{g} d\mu,
\end{align*}
which is well known to be symmetric, lower-semibounded and closed. By $-\Delta + \mu$ we denote the self-adjoint operator associated to the form in the sense of Kato's first representation theorem (cf. \cite[Theorem 2.6]{Kato}, \cite[Theorem VIII.15]{Reed1}), i.e. $-\Delta + \mu$ is the self-adjoint operator in $L^2(\mathbb{R})$ with 
\begin{align*}
D(-\Delta +\mu) &\subseteq D(a_\mu)\\
\langle (-\Delta + \mu) f, g\rangle &= a_\mu(f,g) \quad \forall f \in D(-\Delta + \mu), g \in D(a_\mu).
\end{align*}
We say that a sequence $(\mu_n)$ of finite Radon measures on $\mathbb{R}$ converges weakly to the finite Radon measure $\mu$ if
\begin{align*}
\int_\mathbb{R} f(x) d\mu_n(x) \to \int_\mathbb{R} f(x) d\mu(x), \quad n \to \infty
\end{align*}
for each $f \in C_b(\mathbb{R})$, where $C_b(\mathbb{R})$ is the space of bounded continuous functions on $\mathbb{R}$. In particular, the corresponding Fourier transforms, which are defined as
\begin{align*}
\hat{\nu}(t) := \int_\mathbb{R} e^{itx} d\nu(x)
\end{align*}
for a finite Radon measure $\nu$ on $\mathbb{R}$, converge pointwise. For a finite Radon measure $\mu$ we let $\mu = \mu_+ - \mu_-$ be the Hahn-Jordan decomposition and set $|\mu| = \mu_+ + \mu_-$. If $f$ is a measurable function, we define the measure $f\mu$ by
\begin{align*}
f\mu(B) = \int_B fd\mu, \quad B \in \mathcal{B}(\mathbb{R}).
\end{align*}
$\chi_B$ denotes the indicator function of $B \in \mathcal{B}(\mathbb{R})$. The following theorem plays a key role in this paper: 
\begin{thm}[{\cite[Theorem 3]{Brasche}}]\label{formconvergence}
Let $\mu_n$, $n \in \mathbb{N}$, and $\mu$ be finite Radon measures on $\mathbb{R}$ such that $\mu_n \to \mu$ weakly. Then the operators $-\Delta + \mu_n$ converge to $-\Delta + \mu$ in norm resolvent sense.
\end{thm}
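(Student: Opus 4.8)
My plan is to reduce the asserted norm resolvent convergence to an operator-norm convergence of bounded self-adjoint operators on the \emph{fixed} Hilbert space $L^2(\R)$, by factoring every resolvent through the free resolvent. Fix $\lambda>0$ and write $\Gamma:=(-\Delta+\lambda)^{-1/2}$, a bounded bijection $L^2(\R)\to H^1(\R)$. Two preliminary facts set the stage. First, since the finite signed measures are precisely the dual of $C_0(\R)$ and $\int f\,d\mu_n\to\int f\,d\mu$ for every $f\in C_0(\R)\subset C_b(\R)$, the Banach--Steinhaus theorem yields $M:=\sup_n|\mu_n|(\R)<\infty$. Second, combining $\int|f|^2\,d|\nu|\le\|f\|_\infty^2\,|\nu|(\R)$ with the Sobolev inequality \eqref{sobolevineq} shows that every form $a_\nu$ with $|\nu|(\R)\le M$ is bounded below by a constant depending only on $M$, uniformly in $\nu$; fixing $\lambda$ large enough, all the shifted forms $(a_\nu)_\lambda$ are simultaneously coercive on $H^1(\R)$.

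For a finite signed measure $\nu$ define a bounded operator $T_\nu$ on $L^2(\R)$ by $\langle T_\nu u,v\rangle=\int_\R(\Gamma u)\overline{(\Gamma v)}\,d\nu$; this makes sense because $\Gamma u,\Gamma v\in H^1(\R)\subset C_0(\R)$ are bounded, and $T_\nu$ is self-adjoint. The kernel bound $\|\Gamma u\|_\infty\le C\lambda^{-1/4}\|u\|$ gives $\|T_\nu\|\le C^2\lambda^{-1/2}|\nu|(\R)$, so for $\lambda$ large $\|T_{\mu_n}\|,\|T_\mu\|\le\tfrac12$ for all $n$. Substituting $f=\Gamma\phi$ into $(a_\nu)_\lambda[f]$ and using $\|(\Gamma\phi)'\|_{L^2}^2+\lambda\|\Gamma\phi\|_{L^2}^2=\|\phi\|^2$ identifies the shifted form with $\langle(I+T_\nu)\phi,\phi\rangle$, and Kato's representation theorem then yields the factorization $(-\Delta+\nu+\lambda)^{-1}=\Gamma(I+T_\nu)^{-1}\Gamma$. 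By the second resolvent identity it follows that $(-\Delta+\mu_n+\lambda)^{-1}-(-\Delta+\mu+\lambda)^{-1}=\Gamma(I+T_{\mu_n})^{-1}(T_\mu-T_{\mu_n})(I+T_\mu)^{-1}\Gamma$, and since $\|\Gamma\|$ together with $\|(I+T_{\mu_n})^{-1}\|,\|(I+T_\mu)^{-1}\|\le 2$ are uniformly bounded, it suffices to prove $\|T_{\mu_n}-T_\mu\|\to0$. As $T_{\mu_n}-T_\mu=T_{\mu_n-\mu}$ is self-adjoint, this norm equals $\sup_{\|u\|\le1}\big|\int_\R|\Gamma u|^2\,d(\mu_n-\mu)\big|$.

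The remaining and decisive step is to upgrade the pointwise information supplied by weak convergence — that $\int\varphi\,d\mu_n\to\int\varphi\,d\mu$ for each fixed $\varphi\in C_b(\R)$ — into this uniform bound over the family $\Phi:=\{\,|\Gamma u|^2:\|u\|\le1\,\}$. The members of $\Phi$ are uniformly bounded and, because $\|(\Gamma u)'\|_{L^2}\le C$, uniformly $\tfrac12$-Hölder, hence equicontinuous; by Arzel\`a--Ascoli $\Phi$ restricts to a relatively compact subset of $C(K)$ for every compact interval $K$, and on such a compact set a uniformly bounded sequence of functionals converging pointwise converges uniformly. I expect the main obstacle to lie exactly in the tails: $\Phi$ is \emph{not} relatively compact in $C_0(\R)$, since a bounded $H^1$-ball is not uniformly small at infinity (consider translates), so the noncompactness of $H^1(\R)\hookrightarrow C_0(\R)$ on the whole line must be defeated. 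This is done by exploiting that weak convergence against $C_b(\R)$ forces the total variations $\{|\mu_n|\}$ to be uniformly tight: given $\varepsilon>0$ one finds $R$ with $\sup_n|\mu_n|(\{|x|>R\})+|\mu|(\{|x|>R\})<\varepsilon$. Splitting $\int_\R|\Gamma u|^2\,d(\mu_n-\mu)$ by a continuous cutoff at scale $R$ bounds the tail uniformly by $C\varepsilon$ and reduces the bulk term to the compact case above; letting $\varepsilon\to0$ then gives $\|T_{\mu_n-\mu}\|\to0$ and hence the claimed norm resolvent convergence.
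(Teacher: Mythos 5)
Your architecture is genuinely different from the paper's, and most of it is correct and correctly justified: the Banach--Steinhaus bound $M=\sup_n|\mu_n|(\R)<\infty$, the factorization $(-\Delta+\nu+\lambda)^{-1}=\Gamma(I+T_\nu)^{-1}\Gamma$, the reduction via the resolvent identity to $\|T_{\mu_n-\mu}\|\to0$, the identification of that norm with $\sup_{\|u\|\le1}\big|\int_\R|\Gamma u|^2\,d(\mu_n-\mu)\big|$, and the Arzel\`a--Ascoli treatment of the bulk term are all sound. For comparison, the paper (following \cite{Brasche}) never needs any compactness or tightness in $x$-space: it bounds $\sup_{\|g\|_{H^1}\le1}\big|\int_\R|g|^2\,d(\mu_n-\mu)\big|$ via Parseval by $\frac{2}{\sqrt{\pi}}\bigl(\int_\R\frac{1}{1+t^2}|\hat{\mu}(t)-\hat{\mu}_n(t)|^2\,dt\bigr)^{1/2}$ (Proposition \ref{formestimate}), then uses pointwise convergence of Fourier transforms, the uniform bound on $|\mu_n|(\R)$, and dominated convergence, and finally feeds this into the abstract form-to-resolvent theorem of Section 3 together with Lemma \ref{lemma8}; the weight $(1+t^2)^{-1}$ on the Fourier side does the job that tightness is supposed to do on the space side in your argument.

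The genuine gap sits exactly at what you call the decisive step: the assertion that weak convergence against $C_b(\R)$ forces uniform tightness of the \emph{total variations} $|\mu_n|$. For positive measures this is standard (test against $1$ and against a continuous cutoff $\chi_R$). For signed measures that argument collapses because of cancellation: weak convergence of $\mu_n$ gives no separate control of $\mu_n^{\pm}$, so smallness of $\int(1-\chi_R)\,d\mu_n$ says nothing about $|\mu_n|(\{|x|>R\})$, which is the quantity you need to kill the tail of $\int|\Gamma u|^2\,d(\mu_n-\mu)$. The statement you invoke is in fact true for sequences of signed measures on $\R$, but it is a Dieudonn\'e--Grothendieck type theorem (see, e.g., Bogachev, \emph{Measure Theory}, Vol.~II, Ch.~8); its known proofs require a sliding-hump construction together with a Rosenthal-type lemma to control interference between the escaping masses and the bulk parts of the $\mu_n$, and it is arguably deeper than the theorem you are proving. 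Asserting it in one sentence leaves the hardest step of your proof unproven. The cleanest repair within your framework: keep the factorization and the reduction to $\sup_{\|u\|\le1}\big|\int_\R|\Gamma u|^2\,d(\mu_n-\mu)\big|\to0$, but then estimate this supremum by the Parseval bound of Proposition \ref{formestimate} (the functions $\Gamma u$, $\|u\|\le1$, form a bounded subset of $H^1(\R)$); that needs only pointwise convergence of $\hat{\mu}_n$ plus dominated convergence, and no tightness at all.
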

One goal of the next sections will be to quantify this theorem.
\section{Form convergence and norm resolvent convergence}
The fact that the convergence of sesquilinear forms with common domain implies norm resolvent convergence of the associated operators is well known, cf. \cite[Theorem VIII.25]{Reed1}. The following theorem gives a quantitative result for this statement.
\begin{thm}
Let $A$ and $B$ be self-adjoint and lower semibounded linear operators in $\mathcal{H}$ such that $A \geq 1$ and $B \geq 1$. 
Let $a$ and $b$ be the closed sesquilinear form associated to $A$ and $B$, respectively. Assume that 
$D(a) = D(b) =: D$. If
\[ s := \sup_{g \in D, \  a(g, g) = 1} |a(g, g)  - b(g, g)| < 1, \]
then
\[ \| B^{-1} - A^{-1}\| \leq \frac{s}{\sqrt{1-s}}. \]
\end{thm}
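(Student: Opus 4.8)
The plan is to reduce the hypothesis to a relative form bound, upgrade that to an off-diagonal Cauchy--Schwarz estimate with the \emph{sharp} constant $s$, and then run a short variational computation of the resolvent difference. Throughout I write $\langle f,g\rangle_a := a(f,g)$ and $\|f\|_a := a(f,f)^{1/2}$; since $A \geq 1$, $a$ is a genuine inner product on $D$ with $\|g\|^2 \leq a(g,g)$, and likewise for $b$.

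First I would remove the normalization in the definition of $s$. Both $a$ and $b$ are homogeneous of degree two along rays, so the condition $s = \sup\{|a(g,g)-b(g,g)| : a(g,g)=1\}$ is equivalent, by scaling, to the relative bound
\begin{equation*}
|a(g,g) - b(g,g)| \leq s\, a(g,g), \qquad g \in D.
\tag{$\ast$}
\end{equation*}
In particular $(1-s)\,a(g,g) \leq b(g,g) \leq (1+s)\,a(g,g)$, which is consistent with $D(a)=D(b)=D$.

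The key step is to promote $(\ast)$ to an off-diagonal estimate for the Hermitian difference form $d := a - b$, namely
\begin{equation*}
|d(f,g)| \leq s\, \|f\|_a\, \|g\|_a, \qquad f,g \in D.
\tag{$\ast\ast$}
\end{equation*}
Here the whole point is to obtain the constant $s$ rather than $2s$. I would regard $d$ as a Hermitian sesquilinear form on the pre-Hilbert space $(D, \langle\cdot,\cdot\rangle_a)$; by polarization, $(\ast)$ already shows $d$ is bounded there, so Riesz representation on the $a$-completion gives a bounded operator $T$ with $d(f,g) = \langle Tf, g\rangle_a$, and $T = T^{\ast}$ because $d$ is Hermitian. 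For a bounded self-adjoint operator the numerical radius equals the operator norm, whence $\|T\| = \sup_{\|g\|_a = 1}|d(g,g)| \leq s$, and $(\ast\ast)$ follows from Cauchy--Schwarz. This is the main obstacle: the naive polarization bound only yields $2s$, and it is precisely the identity ``numerical radius $=$ operator norm'' (valid for self-adjoint operators on a complex Hilbert space) that recovers the sharp constant needed for the stated estimate.

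Finally I would compute the resolvent difference variationally. Fix $f \in \mathcal{H}$ and set $x = A^{-1}f$, $y = B^{-1}f$, $w = x - y$; since $x \in D(A) \subseteq D$ and $y \in D(B) \subseteq D$, all three lie in $D$. By Kato's first representation theorem, $a(x,g) = \langle f, g\rangle = b(y,g)$ for every $g \in D$, so $a(w,g) = b(y,g) - a(y,g) = -d(y,g)$. Taking $g = w$ and applying $(\ast\ast)$ gives $a(w,w) = |d(y,w)| \leq s\,\|y\|_a\,\|w\|_a$, hence $a(w,w) \leq s^2\, a(y,y)$. Since $A \geq 1$ we have $\|w\|^2 \leq a(w,w)$, while $(\ast)$ together with $B \geq 1$ gives $a(y,y) \leq \tfrac{1}{1-s}\, b(y,y) = \tfrac{1}{1-s}\langle f, B^{-1}f\rangle \leq \tfrac{1}{1-s}\|f\|^2$. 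Combining these, $\|w\|^2 \leq \tfrac{s^2}{1-s}\|f\|^2$, and taking the supremum over $\|f\| = 1$ yields $\| B^{-1} - A^{-1}\| \leq s/\sqrt{1-s}$, as claimed.
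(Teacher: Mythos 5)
Your proof is correct and follows essentially the same route as the paper's: both hinge on estimating the $a$-norm of $B^{-1}f - A^{-1}f$ through the identity $a(B^{-1}f - A^{-1}f,\, \cdot\,) = (a-b)(B^{-1}f,\, \cdot\,)$ coming from the representation theorem, the bound $a(B^{-1}f, B^{-1}f) \leq \frac{1}{1-s}$, and the fact that the off-diagonal supremum of the Hermitian form $a-b$ over $a$-unit vectors equals its diagonal supremum $s$. The only real difference is that the paper invokes this last fact as a ``well-known identity,'' whereas you supply its proof (Riesz representation plus the equality of numerical radius and norm for self-adjoint operators), which is a worthwhile addition but not a different approach.
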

\begin{proof}
For the whole proof let always $f$ be arbitrary in $\mathcal{H}$ such that $\| f\|=1$. Since $B \geq 1$,
\begin{align}\label{Eq1}
\| B^{-1} f\| \leq 1.
\end{align}
Set
\begin{align*}
g = \frac{B^{-1}f}{\sqrt{a(B^{-1}f, B^{-1}f)}}.
\end{align*}
Then $a(g, g) = 1$ and therefore
\begin{align*}
|a(g,g) - b(g,g)| \leq s.
\end{align*}
Further,
\begin{align*}
b(g,g) &= \frac{\langle f, B^{-1}f\rangle}{a(B^{-1}f, B^{-1}f)}\\
&\leq \frac{\| f\| \|B^{-1} f\|}{a(B^{-1} f, B^{-1} f)}\\
&\leq \frac{1}{a(B^{-1} f, B^{-1} f)}
\end{align*}
by inequality (\ref{Eq1}). Since $s \geq 1 - b(g,g)$, we can conclude
\begin{align*}
s \geq 1 - \frac{1}{a(B^{-1} f, B^{-1} f)}
\end{align*}
and therefore
\begin{align}
a(B^{-1} f, B^{-1} f) \leq \frac{1}{1-s}.
\end{align}
Using $A\geq 1$ (and therefore $a\geq 1$ as well) we get
\begin{align*}
\| B^{-1} f - A^{-1} f\| &\leq \sqrt{a(B^{-1} f - A^{-1} f, B^{-1} f - A^{-1} f)}.
\end{align*}
Denoting $d = B^{-1} f - A^{-1} f$, we have
\begin{align*}
\| B^{-1} f - A^{-1} f\| &\leq \sqrt{a(d,d)}\\
&= \sup_{a(h,h)=1} |a(d,h)|\\
&= \sup_{a(h,h)=1} |a(B^{-1}f, h) - \langle f, h\rangle|\\
&= \sup_{a(h,h)=1} |a(B^{-1}f, h) - b(B^{-1}f, h)|\\
&= \sup_{a(h,h)=1} \Big | a \Big ( \frac{B^{-1}f}{a[B^{-1}f]^{1/2}}, h \Big ) - b \Big ( \frac{B^{-1}f}{a[B^{-1}f]^{1/2}}, h \Big ) \Big | a[B^{-1}f]^{1/2}\\
&\leq \sup_{a(c,c) = 1=a(h,h)} |a(c,h) - b(c,h)|\frac{1}{\sqrt{1-s}}.
\end{align*}
The well-known identity
\begin{align*}
\sup_{a(c,c) = 1 = a(h,h)} |a(c,h) - b(c,h)| = \sup_{a(h,h)=1} |a(h,h)-b(h,h)|
\end{align*}
completes the proof.
\end{proof}
\begin{rem}
Observe that the above estimate is good in the sense that it is asymptotically sharp: In the trivial case 
$a=\langle \cdot,\cdot \rangle$ and $b=(1+s)a$ we have 
\[ \| B^{-1}-A^{-1} \| = \frac{s}{1+s}, \]
and the quotient of $\| B^{-1} - A^{-1} \| $ and the upper bound for $\| B^{-1} - A^{-1} \| $ in the above
estimate tends to $1$, as $s$ tends to zero. 
\end{rem}
\section{Convergence of isolated eigenvalues}
A well-known consequence of norm convergence and norm resolvent convergence for self-adjoint operators is that it implies convergence of the discrete spectrum. The aim of the next two results is to quantify this result.
\begin{lemma}
Let $S$ and $T$ be bounded and self-adjoint operators in $\mathcal{H}$ satisfying $\| S - T\| < \delta$ for some $\delta > 0$. If
\begin{align*}
(E - 2\delta, E + 2\delta) \cap \sigma(S) = \{ E\}
\end{align*}
and $E$ is an eigenvalue of $S$ of multiplicity $n \in \mathbb{N}$, then
\begin{align*}
(E - \delta, E + \delta) \cap \sigma(T) \subset \sigma_d(T)
\end{align*}
and the number, counting multiplicities, of eigenvalues of $T$ in $(E-\delta,E+\delta)$ is equal to $n$. 
\end{lemma}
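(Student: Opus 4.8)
The plan is to reduce the whole statement to a single rank computation for two spectral projections and then read off both conclusions from it. Writing $\eta := \norm{S-T} < \delta$, I would first record the elementary perturbation bound that every $\lambda \in \sigma(T)$ satisfies $\operatorname{dist}(\lambda, \sigma(S)) \le \eta$: indeed, if $\operatorname{dist}(\lambda, \sigma(S)) > \eta$ then $\norm{(S-\lambda)^{-1}} = 1/\operatorname{dist}(\lambda, \sigma(S)) < 1/\eta$, so $T - \lambda = (S-\lambda)\big(I + (S-\lambda)^{-1}(T-S)\big)$ is invertible and $\lambda \notin \sigma(T)$. Combined with $\sigma(S) \cap (E-2\delta, E+2\delta) = \{E\}$, this localizes the spectrum of $T$: any $\lambda \in \sigma(T)$ with $|\lambda - E| < \delta$ must satisfy $|\lambda - E| \le \eta$, while the remaining part of $\sigma(T)$ lies at distance at least $2\delta - \eta > \delta$ from $E$. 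Hence $\sigma(T) \cap (E-\delta, E+\delta) = \sigma(T) \cap [E-\eta, E+\eta]$ is a relatively clopen, isolated piece of $\sigma(T)$, and the endpoints $E \pm \delta$ lie in the resolvent set of $T$.

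Using the Borel functional calculus I then set $P := \chi_{\{E\}}(S)$ and $Q := \chi_{(E-\delta, E+\delta)}(T)$, so that $\operatorname{ran} P$ is the $E$-eigenspace of $S$ and $\dim \operatorname{ran} P = n$. The heart of the argument is to prove $\dim \operatorname{ran} Q = n$, which I would do by exhibiting injections in both directions. For the lower bound $\dim \operatorname{ran} Q \ge n$: given $\phi \in \operatorname{ran} P$ we have $S\phi = E\phi$, so $\norm{(T-E)\phi} = \norm{(T-S)\phi} \le \eta \norm{\phi}$; on the other hand the spectral theorem for $T$ gives $\norm{(T-E)\phi}^2 \ge \delta^2 \norm{(I-Q)\phi}^2$, since $(\lambda - E)^2 \ge \delta^2$ off $(E-\delta, E+\delta)$. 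Therefore $\norm{(I-Q)\phi} \le (\eta/\delta)\norm{\phi} < \norm{\phi}$, so $Q\phi \ne 0$ whenever $\phi \ne 0$ and $Q|_{\operatorname{ran} P}$ is injective.

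For the reverse bound $\dim \operatorname{ran} Q \le n$ I would run the symmetric argument, and this is where the factor-of-two gap in the hypothesis is genuinely used. For $\psi \in \operatorname{ran} Q$ the localization above gives $\norm{(T-E)\psi} \le \eta \norm{\psi}$, whence $\norm{(S-E)\psi} \le \norm{(S-T)\psi} + \norm{(T-E)\psi} \le 2\eta\norm{\psi}$; since $\sigma(S) \setminus \{E\}$ is separated from $E$ by $2\delta$, the spectral theorem for $S$ yields $\norm{(S-E)\psi}^2 \ge (2\delta)^2 \norm{(I-P)\psi}^2$, so that $\norm{(I-P)\psi} \le (\eta/\delta)\norm{\psi} < \norm{\psi}$ and $P|_{\operatorname{ran} Q}$ is injective. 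The two injections force $\dim \operatorname{ran} Q = n$. Finally, since $Q$ is a spectral projection of finite rank $n$ onto an isolated spectral band, $T$ restricts to a self-adjoint operator on the $n$-dimensional space $\operatorname{ran} Q$ whose spectrum is exactly $\sigma(T) \cap (E-\delta, E+\delta)$; this is a finite set of eigenvalues, each isolated in $\sigma(T)$ and of finite multiplicity, hence contained in $\sigma_d(T)$, and counted with multiplicity they number $\dim \operatorname{ran} Q = n$. I expect the only delicate point to be keeping the constant in the reverse injection strictly below one, which is precisely why the hypothesis requires a gap of $2\delta$, rather than $\delta$, around $E$ for $S$.
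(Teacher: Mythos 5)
Your proof is correct and takes essentially the same route as the paper: both arguments reduce the lemma to showing that the spectral projection $\chi_{(E-\delta,E+\delta)}(T)$ has rank $n$, and your two injections ($Q$ injective on $\operatorname{ran}P$, $P$ injective on $\operatorname{ran}Q$) are exactly the contrapositive form of the paper's two contradiction cases, resting on the same pair of spectral-theorem estimates played against $\|S-T\|<\delta$ and the $2\delta$ gap. Your extra ingredients --- the Neumann-series localization of $\sigma(T)$ within $\eta$ of $\sigma(S)$ and the sharpened constant $\eta/\delta$ --- are correct but not needed, since the cruder bound $\|(T-E)\psi\|\le\delta\|\psi\|$ for $\psi\in\operatorname{ran}Q$ already suffices, which is what the paper uses.
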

\begin{proof}
For each $f \in \mathcal{H}$ and $B \in \mathcal{B}(\mathbb{R})$ we use the notation $\mu_{f,A}(B) = \| \chi_B(A) f\|^2$ for each self-adjoint operator $A$ on $\mathcal{H}$. It is well-known that $\mu_{f,A}$ is a measure satisfying $\mu_{f,A}(\mathbb{R}) = \| f\|^2$ and $\| (A-c)f\|^2 = \int |\lambda - c|^2 d\mu_{f,A}(\lambda)$ for $c \in \mathbb{R}$. Further, $\mathcal{H}$ decomposes into $\mathcal{H} = \operatorname{ran} 1_B(A) \oplus \operatorname{ran} 1_{\mathbb{R} \setminus B} (A)$ for each $B \in \mathcal{B}(\mathbb{R})$.\\
Now let $J = (E - \delta, E + \delta)$ and $\widetilde{J} = (E - 2\delta, E + 2\delta)$. The statement we want to prove is equivalent to $\operatorname{dim} \operatorname{ran} 1_J(T) = n$. We will prove this by contradiction. \\
Case 1: Assume $\operatorname{dim} \operatorname{ran} 1_J(T) < n$. Since $\operatorname{dim} \operatorname{ker} (S-E) = n > \operatorname{dim} \operatorname{ran} 1_J(T)$, there is some $f \in \operatorname{ker}(S-E)$ with $\| f\| = 1$ and $f \perp \operatorname{ran} 1_J(T)$. Hence,
\begin{align*}
\mu_{f,T}(J) = \| 1_J(T) f\|^2 = \langle 1_J(T)^2 f, f\rangle = 0
\end{align*}
and thus
\begin{align*}
\| (T-E)f\|^2 = \int |\lambda - E|^2 d\mu_{f,T}(\lambda) \geq \int \delta^2 \mu_{f,T}(\lambda) = \delta^2 \| f\|^2.
\end{align*}
This is a contradiction, since it would imply
\begin{align*}
\| (T-S)f\| = \| (T-E)f - (S-E)f\| = \| (T-E)f\| \geq \delta.
\end{align*}
Case 2: $\operatorname{dim} \operatorname{ran} 1_J(T) > n$. In this case we can choose $f \in \operatorname{ran} 1_J(T)$ with $\| f\| = 1$ and $f \perp \operatorname{ker}(S-E)$. For this $f$ it holds $f \perp \operatorname{ran}_{\mathbb{R} \setminus J}(T)$ and hence
\begin{align*}
\mu_{f,T}(\mathbb{R} \setminus J) = \langle (1_{\mathbb{R} \setminus J}(T))^2 f, f\rangle = 0.
\end{align*}
Thus,
\begin{align*}
\| (T-E)f\|^2 = \int |\lambda - E|^2 d\mu_{f,T}(\lambda) \leq \int \delta^2 d\mu_{f,T}(\lambda) = \delta^2 \| f\|^2.
\end{align*}
It is $\operatorname{ran}1_{\widetilde{J}}(S) = \operatorname{ker}(S-E)$, since $E$ is the only point in $\sigma(S) \cap \widetilde{J}$.
This implies
\begin{align*}
\mu_{f,S} (\widetilde{J}) = \langle (1_{\widetilde{J}}(S))^2 f, f\rangle = 0 
\end{align*}
and hence
\begin{align*}
\| (S-E)f\|^2 = \int |\lambda - E|^2 d\mu_{f,S}(\lambda) \geq \int (2\delta)^2 d\mu_{f,S}(\lambda) = (2\delta)^2 \| f\|^2.
\end{align*}
However, this is impossible, since it would imply
\[ \| (S-T)f\| = \| (S-E) f - (T-E)f\| \geq \|(S-E)f\| - \|(T-E)f\| \geq \delta. \qedhere \]
\end{proof}
Recall the following result:
\begin{thm}[{\cite[Remark 2.1 and Theorem 3.1]{Brasche2}}]\label{essspectrum}
For every finite Radon measure $\mu$ on $\mathbb{R}$ it holds
\[ \sigma_{ess}(-\Delta +\mu) = [0, \infty). \]
\end{thm}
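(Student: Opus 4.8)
The plan is to deduce the result from Weyl's theorem on the stability of the essential spectrum under relatively compact perturbations, once we recall that $\sigma_{ess}(-\Delta) = [0,\infty)$ (classical, e.g.\ via the Fourier transform). Concretely, I would fix $c > 0$ large enough that, by the Sobolev inequality \eqref{sobolevineq}, the form $a_\mu$ becomes bounded below by $1$ after adding $c$; then both $A := -\Delta + c$ and $B := -\Delta + \mu + c$ are self-adjoint operators $\geq 1$ with common form domain $H^1(\mathbb{R})$. It suffices to show that the resolvent difference $B^{-1} - A^{-1}$ is compact, for then Weyl's theorem yields $\sigma_{ess}(B) = \sigma_{ess}(A)$ and hence $\sigma_{ess}(-\Delta + \mu) = \sigma_{ess}(-\Delta) = [0,\infty)$.

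The heart of the argument is the compactness of the trace (embedding) operator
\[ \tau : H^1(\mathbb{R}) \to L^2(\mathbb{R}, |\mu|), \qquad \tau f = f, \]
where one uses the continuous representative of $f$. Boundedness of $\tau$ follows from \eqref{sobolevineq}, since $\| \tau f\|_{L^2(|\mu|)}^2 \leq \| f\|_\infty^2\, |\mu|(\mathbb{R})$. To obtain compactness, I would take a sequence $(f_n)$ bounded in $H^1(\mathbb{R})$, pass to a weakly convergent subsequence $f_n \rightharpoonup f$, and split the estimate at a large radius $R$. On $[-R, R]$ the embedding $H^1([-R,R]) \hookrightarrow C([-R,R])$ is compact (in one dimension $H^1$ functions are uniformly $\tfrac12$-H\"older, so Arzel\`a--Ascoli applies), giving uniform convergence there and hence $\int_{[-R,R]} |f_n - f|^2\, d|\mu| \to 0$. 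On the complement, $\int_{\mathbb{R}\setminus[-R,R]} |f_n - f|^2\, d|\mu| \leq \big(\sup_n \| f_n - f\|_\infty^2\big)\, |\mu|(\mathbb{R}\setminus[-R,R])$, where the supremum is finite by \eqref{sobolevineq} while $|\mu|(\mathbb{R}\setminus[-R,R]) \to 0$ as $R \to \infty$ because $\mu$ is finite. Letting first $n \to \infty$ and then $R \to \infty$ shows $\tau f_n \to \tau f$ in $L^2(|\mu|)$, so $\tau$ is compact.

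With $\tau$ in hand, I would write the perturbing form as $\int f \bar g\, d\mu = \langle S\, \tau f, \tau g\rangle_{L^2(|\mu|)}$, where $S$ is multiplication by the Radon--Nikodym sign $d\mu/d|\mu| \in \{-1,+1\}$, a bounded self-adjoint operator on $L^2(|\mu|)$. Since $A^{-1/2}$ maps $L^2(\mathbb{R})$ boundedly into the form domain $H^1(\mathbb{R})$, the operator $T := \tau A^{-1/2} : L^2(\mathbb{R}) \to L^2(|\mu|)$ is compact as a composition of a bounded and a compact map, and hence $K := T^* S T$ is a compact self-adjoint operator on $L^2(\mathbb{R})$. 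A short computation with the associated forms gives $B^{-1} = A^{-1/2}(1 + K)^{-1} A^{-1/2}$, so that $B^{-1} - A^{-1} = -A^{-1/2}(1+K)^{-1} K A^{-1/2}$ is compact, which completes the reduction to Weyl's theorem.

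The step I expect to be the main obstacle is precisely the compactness of $\tau$ for measures $\mu$ \emph{without} compact support — for instance a singularly continuous measure of full support, the case singled out in the introduction. Compact support would make compactness immediate from the Rellich-type embedding, whereas the general case forces the truncation argument above, in which finiteness of $|\mu|$ together with the uniform sup-bound from \eqref{sobolevineq} is exactly what tames the tail. As an alternative route one could establish the two inclusions directly through singular Weyl sequences: for $\lambda = k^2 \geq 0$ one builds approximate eigenfunctions $e^{ikx}$ cut off and translated to infinity, where $|\mu|$ carries vanishing mass, to obtain $[0,\infty) \subseteq \sigma_{ess}(-\Delta+\mu)$; and the absence of essential spectrum below $0$ follows since any singular sequence must escape to infinity, where the operator is essentially $-\Delta \geq 0$.
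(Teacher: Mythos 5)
This theorem is not proved in the paper at all: it is imported from \cite{Brasche2} (Remark 2.1 and Theorem 3.1 there), so there is no internal proof to compare against, and your argument has to be judged on its own merits. On those merits it is correct, and it follows what is essentially the standard route to this kind of statement (and, in mechanism, the route of the cited literature): factor the perturbing form through the trace map $\tau\colon H^1(\mathbb{R})\to L^2(|\mu|)$, prove compactness of $\tau$, obtain compactness of the resolvent difference from the factorized resolvent formula, and invoke Weyl's theorem. Your compactness proof for $\tau$ is the heart of the matter and is sound: the compact embedding $H^1([-R,R])\hookrightarrow C([-R,R])$ handles the bulk, and the tail is controlled by $\bigl(\sup_n\|f_n-f\|_\infty^2\bigr)\,|\mu|(\mathbb{R}\setminus[-R,R])$, where finiteness of $|\mu|$ — not compact support — is exactly what is needed; this is precisely the point that makes the theorem work for, e.g., singular measures of unbounded support. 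Two small items deserve a sentence each. First, the identity $B^{-1}=A^{-1/2}(1+K)^{-1}A^{-1/2}$ presupposes that $1+K$ is invertible; this is true but should be said: for $u\neq 0$,
\begin{align*}
\langle (1+K)u,u\rangle = b\bigl(A^{-1/2}u,\,A^{-1/2}u\bigr)\geq \|A^{-1/2}u\|^2>0,
\end{align*}
so $1+K$ is injective, and since $K$ is compact the Fredholm alternative gives bijectivity; after that, the verification that $A^{-1/2}(1+K)^{-1}A^{-1/2}$ inverts $B$ is the routine form computation you indicate. Second, the closing ``alternative route'' via singular Weyl sequences is not at the same level of rigor: the claim that any singular sequence must escape to infinity ``where the operator is essentially $-\Delta\geq 0$'' is more or less a restatement of the assertion $\sigma_{ess}(-\Delta+\mu)\cap(-\infty,0)=\emptyset$ rather than a proof of it, and the cut-off plane waves are not obviously in the operator domain when $\mu$ charges every neighborhood of infinity. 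Since you offer that only as an aside, it does not affect the validity of your main argument.
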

From the previous lemma one can derive the following result, which is applicable to our setting:
\begin{thm}
Let $A, \widetilde{A}$ be self-adjoint operators in $\mathcal{H}$ such that $\sigma_{ess}(A) = \sigma_{ess}(\widetilde{A}) = [0, \infty)$ and assume $A, \widetilde{A} \geq c$ for some $c \leq 0$. Denote $\alpha = 1 - c$ and let $\delta$ be such that $0 < \delta < \frac{1}{2\alpha}$. Further, assume that
\begin{align*}
\| (A+\alpha)^{-1} - (\widetilde{A} + \alpha)^{-1}\| < \delta
\end{align*}
and
\begin{align*}
\Big ( \frac{E - 2\delta \alpha (E + \alpha)}{1 + 2\delta (E + \alpha)}, \frac{E + 2\delta \alpha (E + \alpha)}{1 - 2\delta (E + \alpha)} \Big ) \cap \sigma(A) = \{ E \},
\end{align*}
where $E$ is an eigenvalue of $A$ with multiplicity $n \in \mathbb{N}$. Then,
\begin{align*}
\Big ( \frac{E - \alpha \delta (E+\alpha)}{1 + \delta (E+\alpha)}, \frac{E + \alpha \delta (E + \alpha)}{1 - \delta(E + \alpha)} \Big ) \cap \sigma(\widetilde{A}) \subset \sigma_d(\widetilde{A})
\end{align*}
consists of finitely many eigenvalues of $\widetilde{A}$ and the number, counting multiplicities, of these eigenvalues is equal to $n$.
\end{thm}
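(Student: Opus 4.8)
The plan is to reduce the statement to the preceding Lemma by passing to the bounded self-adjoint resolvent operators
\[ S := (A+\alpha)^{-1}, \qquad T := (\widetilde{A}+\alpha)^{-1}, \]
and then exploiting the spectral mapping theorem for the decreasing homeomorphism $\varphi(\mu) = \tfrac1\mu - \alpha$. First I would record the elementary facts. Since $A,\widetilde{A} \geq c$ and $\alpha = 1-c$, both $A+\alpha$ and $\widetilde{A}+\alpha$ are $\geq 1$, so $S$ and $T$ are bounded and self-adjoint with spectra in $(0,1]$, and $\|S-T\| < \delta$ is exactly the hypothesis. Because $-\alpha$ lies in the resolvent sets of $A$ and $\widetilde{A}$, while both operators are unbounded (their essential spectrum is $[0,\infty)$), the spectral mapping theorem for resolvents gives
\[ \sigma(S) = \{ (\lambda+\alpha)^{-1} : \lambda \in \sigma(A) \} \cup \{0\}, \]
and likewise for $T$ and $\widetilde{A}$. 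Moreover, for $\mu \neq 0$ one has the identity of eigenspaces $\ker(S-\mu) = \ker\big(A - \varphi(\mu)\big)$, so eigenvalues, their multiplicities, and isolatedness are transported faithfully between $\sigma(A)$ and $\sigma(S)\setminus\{0\}$ by $\varphi$.

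Next, set $\widetilde{E} := (E+\alpha)^{-1}$. Since $E$ is isolated in $\sigma(A)$ whereas $\sigma_{ess}(A) = [0,\infty)$ has no gaps, $E$ must lie below the essential spectrum, i.e. $E < 0$; hence $E+\alpha \in [1,\alpha)$ and $\widetilde{E} \in (1/\alpha, 1]$. Combined with $\delta < \tfrac{1}{2\alpha}$ this yields $\widetilde{E} - 2\delta > 0$, so in particular all denominators occurring in the hypotheses and in the conclusion are strictly positive. The crucial observation is that the two awkward-looking intervals in the statement are precisely the images under $\varphi$ of the symmetric intervals $(\widetilde{E}-2\delta, \widetilde{E}+2\delta)$ and $(\widetilde{E}-\delta, \widetilde{E}+\delta)$; this is a direct endpoint computation, e.g. $\varphi(\widetilde{E}+2\delta) = \tfrac{E-2\delta\alpha(E+\alpha)}{1+2\delta(E+\alpha)}$ and $\varphi(\widetilde{E}-2\delta) = \tfrac{E+2\delta\alpha(E+\alpha)}{1-2\delta(E+\alpha)}$, and similarly with $\delta$ in place of $2\delta$. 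Since $\varphi$ is an orientation-reversing bijection carrying $\sigma(A)\cap\varphi\big((\widetilde{E}-2\delta,\widetilde{E}+2\delta)\big)$ onto $\sigma(S)\cap(\widetilde{E}-2\delta,\widetilde{E}+2\delta)$, the hypothesis on $A$ translates into $(\widetilde{E}-2\delta,\widetilde{E}+2\delta)\cap\sigma(S) = \{\widetilde{E}\}$, and $\widetilde{E}$ is an eigenvalue of $S$ of multiplicity $n$.

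With these translations in place, I would apply the Lemma to the bounded self-adjoint operators $S$ and $T$ with the given $\delta$. It yields $(\widetilde{E}-\delta, \widetilde{E}+\delta)\cap\sigma(T) \subset \sigma_d(T)$ together with the statement that $T$ has exactly $n$ eigenvalues there, counting multiplicities. Mapping this back through $\varphi$ — which is legitimate because $\widetilde{E}-\delta > 0$ keeps the whole interval bounded away from the distinguished point $0$ — transports the conclusion to the interval $\varphi\big((\widetilde{E}-\delta,\widetilde{E}+\delta)\big)$, i.e. exactly the interval in the theorem. Using $\ker(T-\mu) = \ker\big(\widetilde{A}-\varphi(\mu)\big)$ for $\mu\neq 0$, the correspondence preserves the number of eigenvalues, their multiplicities, and their membership in the discrete spectrum, which is the assertion.

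The hard part will not be any single estimate but the bookkeeping around the spectral mapping theorem. Concretely, the delicate points are: confirming $E<0$ (so that $\varphi$ is orientation-consistent and every denominator stays positive), and checking that the point $0 \in \sigma(S)\cap\sigma(T)$ — forced to lie in the spectra by the unboundedness of $A$ and $\widetilde{A}$ — falls outside all of the intervals under consideration, so that $\varphi$ furnishes a genuine bijection there and the discrete/essential dichotomy together with multiplicities is transferred without loss. Once these are settled, the endpoint identifications and the application of the Lemma are routine.
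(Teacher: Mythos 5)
Your proposal is correct and takes essentially the same approach as the paper, whose entire proof reads ``Apply the previous lemma to the resolvents $(A+\alpha)^{-1}$ and $(\widetilde{A}+\alpha)^{-1}$ and use the spectral mapping theorem'' --- precisely your reduction. You additionally supply the details the paper leaves implicit (deducing $E<0$ from the isolation hypothesis and $\sigma_{ess}(A)=[0,\infty)$, the endpoint computations under $\varphi(\mu)=\tfrac{1}{\mu}-\alpha$, and the check that $0\in\sigma(S)\cap\sigma(T)$ stays outside all relevant intervals), and these all check out.
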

\begin{proof}
Apply the previous lemma to the resolvents $(A + \alpha)^{-1}$ and $(\widetilde{A} - \alpha)^{-1}$ and use the spectral mapping theorem.
\end{proof}
In many applications the minimum of the spectrum of $A$ is an isolated eigenvalue with multiplicity one. Hence it follows that the minimum of the spectrum
of $A_n$ is an isolated eigenvalue with multiplicity one for eventually every $n$, if the sequence $(A_n)$ converges to $A$ in the norm resolvent sense. 
Via the min-max-principle the rate of convergence of the lowest eigenvalue can be estimated directly with the aid of the associated sesquilinear forms:
\begin{thm}
Let $A$ and $B$ be self-adjoint and lower semibounded linear operators in $\mathcal{H}$ such that $A \geq 1-c$ and $B \geq 1-c$ with $c > 0$.
Let $a$ and $b$ be the closed sesquilinear forms associated to $A$ and $B$, respectively. Assume that 
$D(a) = D(b) =: D$ and 
\[ s := \sup_{g \in D, \  a_c(g, g) = 1} |a(g, g)  - b(g, g)| < 1.\]
Further assume that $E_1(A) := \min \sigma(A)$ is a negative eigenvalue of $A$ and $E_1(B) := \min \sigma(B)$ is a negative eigenvalue of $B$. Then
\begin{eqnarray}\label{ground_state_energy}
E_1(B) \le E_1(A) + (E_1(A)+c)s.
\end{eqnarray}
\end{thm}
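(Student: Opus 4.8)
The plan is to exploit the variational (min–max) characterization of the bottom of the spectrum together with the homogeneity of the quantity defining $s$. Since $B$ is self-adjoint and lower semibounded with associated closed form $b$, the Rayleigh--Ritz principle gives
\[ E_1(B) = \min\sigma(B) = \inf_{g\in D,\, g\neq 0} \frac{b(g,g)}{\langle g,g\rangle}. \]
The idea is to feed a single, well-chosen test vector into this infimum, namely a normalized ground state of $A$. The assumption that $E_1(A)$ is an eigenvalue is exactly what makes such a vector available.

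First I would pick $f\in D(A)\subseteq D$ with $\|f\|=1$ and $Af = E_1(A)f$, so that $a(f,f)=E_1(A)$. Inserting $f$ into the variational principle immediately yields $E_1(B)\le b(f,f)$, and it remains only to control $b(f,f)$ in terms of $a(f,f)$ and $s$.

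The crucial step is to upgrade the estimate defining $s$ from normalized vectors to all of $D$ by homogeneity. For any $g\in D$ with $g\neq 0$ one has $a_c(g,g)>0$ (since $A\ge 1-c$ forces $a_c\ge 1$), so applying the definition of $s$ to $g/\sqrt{a_c(g,g)}$ and rescaling gives
\[ |a(g,g)-b(g,g)|\le s\,a_c(g,g). \]
Applied to the ground state $f$ this reads $|a(f,f)-b(f,f)|\le s\,a_c(f,f)$, and since $a_c(f,f)=a(f,f)+c\|f\|^2=E_1(A)+c$, we obtain $|a(f,f)-b(f,f)|\le s\,(E_1(A)+c)$.

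Combining these facts gives $b(f,f)\le a(f,f)+s(E_1(A)+c)=E_1(A)+(E_1(A)+c)s$, and together with $E_1(B)\le b(f,f)$ this is precisely the claimed inequality. I do not expect a serious obstacle here; the only points requiring care are verifying that the ground state of $A$ genuinely lies in the common form domain (which follows from $D(A)\subseteq D(a)=D$) and noting $E_1(A)+c\ge 1>0$ from $A\ge 1-c$, so the right-hand side behaves as expected. The genuinely load-bearing step is the scaling argument that turns the normalized definition of $s$ into the pointwise bound $|a(g,g)-b(g,g)|\le s\,a_c(g,g)$.
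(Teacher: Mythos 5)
Your proof is correct and follows essentially the same route as the paper: both insert the normalized ground state $f$ of $A$ into the variational principle for $B$ and use the rescaled vector $f/\sqrt{a_c(f,f)}$ to convert the normalized definition of $s$ into the bound $|a(f,f)-b(f,f)|\le s\,a_c(f,f)=s\,(E_1(A)+c)$. The only (cosmetic) difference is that the paper phrases the variational step for the shifted operator $B+cI$ and the form $b_c$, while you work with $b$ directly.
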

\begin{proof}
Let $f$ be a normalized eigenvector of $A$ corresponding to the eigenvalue $E_1(A)$. Then $a_c(f,f) = (E_1(A)+c)$. 
Put
\[ g: = \frac{f}{\sqrt{E_1(A) +c}}.\]
Then $a_c(g,g) = 1$. By the min-max-principle, 
\begin{eqnarray*}
\min \sigma(B+cI) & = & \min_{h\in D,\parallel h \parallel =1} b_c(h,h) \\
& \le & b_c(f,f) = (E_1(A) +c) b_c(g,g) \\
& \le & (E_1(A) +c) (a_c(g,g)+s) = (E_1(A)+c)(1+s),
\end{eqnarray*}
and hence $E_1(B)=\min \sigma(B) \le E_1(A) + (E_1(A)+c)s \le E_1(A) +cs$. 
\end{proof}
\begin{rem} a) Of course, changing the roles of $A$ and $B$ we also get a lower bound for $E_1(B)$. \\
b) The smaller $c$ is the better is the estimate (\ref{ground_state_energy}).
In general the smallest $c$ is not known. However, if one knows (approximately) $E_1(A)$ and any
constant $c$ satisfying the hypothesis of the previous 
theorem and $cs$ is sufficiently small, then one may use (\ref{ground_state_energy}) repeatedly in order to get smaller and smaller 
constants $c$ in the estimate (\ref{ground_state_energy}) and hence better estimates for $E_1(B)$.
\end{rem}
\section{Convergence of the sesquilinear forms}
The aim of this section is to present a quantitative version of the result presented in Theorem \ref{formconvergence} above.
\begin{lemma}[{\cite[Proof of Theorem 3]{Brasche}}]\label{lemma8}
Let $\mu$ and $\mu_n$, $n \in \mathbb{N}$, be finite Radon measures on $\mathbb{R}$ such that the sequence $(\mu_n)$ converges weakly to $\mu$. Then there exists a common lower bound $c$ for $a_\mu$ and all $a_{\mu_n}$ and
\begin{align*}
\sup_{g \in H^1(\mathbb{R}), (a_\mu)_{1-c}(g,g) = 1} |(a_{\mu} - a_{\mu_n})(g,g)| \leq 2 \sup_{g \in H^1(\mathbb{R}), \| g\|_{H^1(\mathbb{R})} \leq 1} |(a_{\mu} - a_{\mu_n})(g,g)|. 
\end{align*}

\end{lemma}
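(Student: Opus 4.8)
The statement has two components: first, that a common lower bound $c$ exists for $a_\mu$ and all the $a_{\mu_n}$; second, the comparison inequality between the supremum taken over the unit ball of the form norm $(a_\mu)_{1-c}$ and the supremum over the $H^1$-unit ball. Let me think about how to establish each.

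Let me think carefully about the first part. I need a lower bound that works simultaneously for $a_\mu$ and every $a_{\mu_n}$. The potential term is $\int |g|^2\,d\mu$, which I can bound below by $-\int |g|^2\,d|\mu|$. Since $|g(x)|^2 \le \|g\|_\infty^2$, the Sobolev inequality \eqref{sobolevineq} gives $\int |g|^2\,d|\mu| \le |\mu|(\mathbb{R})\,\|g\|_\infty^2 \le |\mu|(\mathbb{R})\,(\varepsilon\|g'\|_{L^2}^2 + \tfrac{4}{\varepsilon}\|g\|_{L^2}^2)$. Choosing $\varepsilon$ so that $\varepsilon\,|\mu|(\mathbb{R}) < 1$ absorbs the kinetic term and produces a lower bound for $a_\mu$ depending only on the total mass $|\mu|(\mathbb{R})$. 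The key observation is that weak convergence forces the total variations $|\mu_n|(\mathbb{R})$ to stay bounded — this should follow from the uniform boundedness principle applied to the sequence viewed as functionals, or at least one can extract a uniform bound $M := \sup_n |\mu_n|(\mathbb{R}) < \infty$. Then a single choice of $\varepsilon$ (small relative to $1/M$) yields one constant $c$ that is a lower bound for all the forms at once.

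\medskip

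For the second (main) part, let $M:=\sup_n |\mu_n|(\mathbb{R})$ as above and set $D:=a_\mu - a_{\mu_n}$, noting $D(g,g) = \int |g|^2\,d(\mu - \mu_n)$ is exactly the difference of the potential terms since the kinetic terms cancel. The plan is to relate the two suprema by comparing the two normalizations of $g$. On the left the constraint is $(a_\mu)_{1-c}(g,g) = \|g'\|_{L^2}^2 + \int|g|^2\,d\mu + (1-c)\|g\|_{L^2}^2 = 1$, while on the right it is $\|g\|_{H^1}^2 = \|g'\|_{L^2}^2 + \|g\|_{L^2}^2 \le 1$. The idea is that any $g$ with $(a_\mu)_{1-c}(g,g)=1$ has its $H^1$-norm controlled: because $c$ is a lower bound one has $(a_\mu)_{1-c}(g,g) \ge \|g\|_{L^2}^2$ trivially, and more importantly the kinetic energy $\|g'\|_{L^2}^2$ can be bounded in terms of $(a_\mu)_{1-c}(g,g)$ using the same Sobolev absorption argument that gave the lower bound. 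The factor $2$ in the inequality should emerge precisely from this absorption: controlling $\|g'\|_{L^2}^2$ costs a factor that, with the appropriate choice of $\varepsilon$, is bounded by $2$.

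\medskip

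Concretely, I would argue that the form norm $(a_\mu)_{1-c}$ dominates $\tfrac12\|g\|_{H^1}^2$, i.e. $\|g\|_{H^1}^2 \le 2\,(a_\mu)_{1-c}(g,g)$, by writing $(a_\mu)_{1-c}(g,g) = \|g'\|_{L^2}^2 + \int |g|^2\,d\mu + (1-c)\|g\|_{L^2}^2$ and bounding the potential term below via Sobolev so that what remains still controls both $\|g'\|_{L^2}^2$ and $\|g\|_{L^2}^2$ with a combined constant at least $\tfrac12$. Given this, any $g$ on the left-hand unit sphere satisfies $\|g\|_{H^1} \le \sqrt 2$, and since $D(g,g)$ is quadratic in $g$, evaluating $D$ at $g/\sqrt2$ (which lies in the $H^1$-unit ball) and using $|D(g,g)| = 2\,|D(g/\sqrt2, g/\sqrt2)|$ yields exactly the claimed factor $2$.

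\medskip

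The main obstacle, and the step requiring the most care, is choosing $\varepsilon$ (equivalently, choosing $c$) so that the Sobolev absorption leaves behind coefficients that are bounded below by $\tfrac12$ uniformly, producing the clean constant $2$ rather than some larger mass-dependent factor; this forces the lower bound $c$ to be selected in tandem with the target constant, so the two parts of the lemma are not independent but must be proved together with a single coordinated choice of $\varepsilon$ depending only on $M = \sup_n |\mu_n|(\mathbb{R})$.
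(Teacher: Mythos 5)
Your proof is correct. The paper itself gives no proof of Lemma \ref{lemma8} (it defers to the proof of Theorem 3 in \cite{Brasche}), and your argument --- uniform boundedness principle to get $M=\sup_n|\mu_n|(\mathbb{R})<\infty$, Sobolev absorption with $\varepsilon M\le\tfrac12$ to produce the common lower bound $c=-4M/\varepsilon$, the resulting norm equivalence $(a_\mu)_{1-c}(g,g)\ge\tfrac12\|g\|_{H^1}^2$, and quadratic rescaling of $g$ by $1/\sqrt{2}$ --- is exactly the standard reconstruction of that cited argument, including where the factor $2$ comes from.
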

\begin{prop}[{\cite[Lemma 2]{Brasche}}]\label{formestimate}
Let $\nu$ and $\mu$ be finite Radon measures on $\mathbb{R}$. Then, for each $g \in H^1(\mathbb{R})$,
\begin{align*}
|a_\mu(g,g) - a_{\nu}(g,g)| = \Big | \int_{\mathbb{R}} \vert g \vert^2 d(\mu - \nu) \Big | \leq 
\| g\|_{H^1}^2 \frac{2}{\sqrt{\pi}} \Big (\int_{\mathbb{R}} \frac{1}{1+t^2}|\hat{\mu}(t) - \hat{\nu}(t)|^2 dt \Big )^{1/2}.
\end{align*} 
In particular, if the finite Radon measures $\mu_n$ on $\mathbb{R}$ converge weakly to  $\mu$ and 
the forms $a_{\mu_n}$  have a common lower bound $c$, then the forms $a_{\mu_n}$ converge to the form $a_\mu$, and
\begin{align*}
\sup_{g \in H^1(\mathbb{R}), (a_{\mu})_{1-c}(g,g) = 1} |a_{\mu}(g,g) - a_{\mu_n}(g,g)| \leq \frac{4}{\sqrt{\pi}} \Big ( \int_{\mathbb{R}} \frac{1}{1+t^2} |\hat{\mu}(t) - \hat{\mu}_n(t)|^2 dt \Big )^{1/2} \to 0
\end{align*}
for $n \to \infty$.
\end{prop}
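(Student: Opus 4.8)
The plan is to prove the displayed inequality and then deduce the ``in particular'' claim by combining it with Lemma \ref{lemma8}. For the equality, I would note that in $a_\mu(g,g)-a_\nu(g,g)$ the kinetic terms $\int_\mathbb{R}|g'|^2\,dx$ cancel, leaving $\int_\mathbb{R}|g|^2\,d\sigma$ with the signed measure $\sigma:=\mu-\nu$. The whole task is therefore to bound $\big|\int_\mathbb{R}|g|^2\,d\sigma\big|$ by the right-hand side.

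First I would set $\phi:=|g|^2$ and record two facts about it. Since $g\in H^1(\mathbb{R})$ has an absolutely continuous representative with $g'\in L^2$, the product $\phi=g\overline g$ is absolutely continuous with $\phi'=g'\overline g+g\overline{g'}$, hence $\phi\in H^1(\mathbb{R})$; estimating $|\phi'|\le 2|g|\,|g'|$ and $\int|g|^4\le\|g\|_\infty^2\|g\|_{L^2}^2$ gives $\|\phi\|_{H^1}\le 2\|g\|_\infty\|g\|_{H^1}$. Moreover, because $\phi\in H^1$ we have $(1+t^2)^{1/2}\hat\phi\in L^2$, and Cauchy--Schwarz against $(1+t^2)^{-1/2}\in L^2$ shows $\hat\phi\in L^1$, so Fourier inversion holds pointwise for $\phi$.

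The key step is a Parseval-type pairing of $\phi$ against the finite measure $\sigma$. Using Fourier inversion for $\phi$ and Fubini (justified by $\int\!\int|\hat\phi(t)|\,d|\sigma|(y)\,dt=|\sigma|(\mathbb{R})\,\|\hat\phi\|_{L^1}<\infty$) I would write $\int\phi\,d\sigma=\frac{1}{2\pi}\int\hat\phi(t)\,\hat\sigma(-t)\,dt$, and since $\sigma$ is real, $\hat\sigma(-t)=\overline{\hat\sigma(t)}$. A Cauchy--Schwarz with weight $1+t^2$ then yields
\[ \Big|\int\phi\,d\sigma\Big|\le\frac{1}{2\pi}\Big(\int(1+t^2)|\hat\phi(t)|^2\,dt\Big)^{1/2}\Big(\int\frac{1}{1+t^2}|\hat\sigma(t)|^2\,dt\Big)^{1/2}. \]
By Plancherel together with $\widehat{\phi'}(t)=-it\,\hat\phi(t)$, the first factor equals $\sqrt{2\pi}\,\|\phi\|_{H^1}$, so the prefactor becomes $\frac{1}{\sqrt{2\pi}}\|\phi\|_{H^1}\le\frac{2}{\sqrt{2\pi}}\|g\|_\infty\|g\|_{H^1}$.

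Finally I would absorb $\|g\|_\infty$ using the Sobolev inequality (\ref{sobolevineq}) with the specific choice $\varepsilon=2$, which gives exactly $\|g\|_\infty^2\le 2\|g\|_{H^1}^2$; this is the single place where the precise constant matters, and $\varepsilon=2$ is what produces the announced factor $\tfrac{2}{\sqrt\pi}$, since $\frac{2}{\sqrt{2\pi}}\|g\|_\infty\le\frac{2}{\sqrt{2\pi}}\sqrt2\,\|g\|_{H^1}=\frac{2}{\sqrt\pi}\|g\|_{H^1}$. For the ``in particular'' part, Lemma \ref{lemma8} replaces the normalization $(a_\mu)_{1-c}(g,g)=1$ by $\|g\|_{H^1}\le1$ at the cost of a factor $2$, turning $\tfrac{2}{\sqrt\pi}$ into $\tfrac{4}{\sqrt\pi}$; the convergence to $0$ then follows from the pointwise convergence $\hat\mu_n(t)\to\hat\mu(t)$ and dominated convergence, the integrand being controlled by a constant multiple of $(1+t^2)^{-1}$ once the total variations are uniformly bounded. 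The main obstacle is the Fourier-analytic pairing together with verifying $\phi=|g|^2\in H^1$, since everything downstream --- the weighted Cauchy--Schwarz, the Plancherel identity, and the sharp constant --- rests on those two facts.
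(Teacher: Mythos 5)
Your proof is correct, and it supplies what the paper itself leaves to a citation: Proposition \ref{formestimate} is quoted from \cite[Lemma 2]{Brasche} with no in-paper proof, and your argument --- Fourier inversion of $\phi=|g|^2$ paired against $\mu-\nu$, weighted Cauchy--Schwarz, Plancherel with $\widehat{\phi'}(t)=\pm it\hat{\phi}(t)$, and the Sobolev inequality (\ref{sobolevineq}) at $\varepsilon=2$ --- is exactly the standard route and reproduces the constant $2/\sqrt{\pi}$ precisely. Your deduction of the ``in particular'' part also matches the paper's intended route, namely applying the first inequality on the $H^1$ unit ball and invoking the factor-$2$ comparison of normalizations from Lemma \ref{lemma8}. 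The only loose end is the phrase ``once the total variations are uniformly bounded'': this is not a hypothesis of the proposition, but it does follow from the assumed weak convergence via Banach--Steinhaus applied to the functionals $f \mapsto \int f \, d\mu_n$ on $C_b(\mathbb{R})$, whose norms are $|\mu_n|(\mathbb{R})$; with that one-line justification added, the dominated-convergence step, and hence the whole proof, is complete.
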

The proposition provides an upper bound for the error one makes by truncating the potential. 
\begin{cor}\label{trunc}
Let $\mu$ be a finite Radon measure on $\R$ and $B\subset \R$ be a Borel set. Then
\begin{eqnarray}\label{truncation}
\vert a_{\mu}(g,g) - a_{\chi_B \mu}(g,g) \vert \le 2 \| g \|_{H^1}^2 \vert \mu \vert (\R \setminus B)
\end{eqnarray}
for every $g\in H^1(\R)$. 
\end{cor}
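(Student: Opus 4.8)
The plan is to reduce the statement to a pointwise bound on $|g|^2$ against the total variation measure, using the fact that the kinetic parts of the two forms coincide. Writing out the definitions, $a_{\mu}(g,g) - a_{\chi_B\mu}(g,g) = \int_{\R}|g'|^2\,dx + \int_{\R}|g|^2\,d\mu - \int_{\R}|g'|^2\,dx - \int_{\R}|g|^2\,d(\chi_B\mu)$, so the $\|g'\|_{L^2}^2$ terms cancel and only the potential contributions remain. Since $\mu - \chi_B\mu = \chi_{\R\setminus B}\mu$, this difference equals $\int_{\R}|g|^2\,d(\chi_{\R\setminus B}\mu)$. (Equivalently, one may invoke the identity already recorded in Proposition \ref{formestimate}, namely $|a_\mu(g,g)-a_\nu(g,g)| = |\int_{\R}|g|^2\,d(\mu-\nu)|$, applied with $\nu = \chi_B\mu$.)

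Next I would estimate this integral against the signed measure by its total variation. Using $|\chi_{\R\setminus B}\mu| = \chi_{\R\setminus B}|\mu|$ and the triangle inequality for integration against a signed measure gives
\[
\Big|\int_{\R}|g|^2\,d(\chi_{\R\setminus B}\mu)\Big| \le \int_{\R\setminus B}|g|^2\,d|\mu| \le \|g\|_\infty^2\,|\mu|(\R\setminus B),
\]
since the integrand is bounded by $\|g\|_\infty^2$ on the domain of integration $\R\setminus B$.

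Finally I would convert the $L^\infty$-norm into the $H^1$-norm by the Sobolev inequality \eqref{sobolevineq}. The only genuine choice in the whole argument is the value of $\varepsilon$: taking $\varepsilon = 2$ in $\|g\|_\infty^2 \le \varepsilon\|g'\|_{L^2}^2 + \tfrac{4}{\varepsilon}\|g\|_{L^2}^2$ balances both coefficients at $2$, yielding $\|g\|_\infty^2 \le 2(\|g'\|_{L^2}^2 + \|g\|_{L^2}^2) = 2\|g\|_{H^1}^2$. Combining this with the previous display produces the claimed bound $|a_\mu(g,g)-a_{\chi_B\mu}(g,g)| \le 2\|g\|_{H^1}^2\,|\mu|(\R\setminus B)$.

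There is no real obstacle here; the corollary is an elementary consequence of the Sobolev embedding once the kinetic terms are cancelled. The one point deserving a moment's care is the correct handling of the signed measure, i.e. passing from $\chi_{\R\setminus B}\mu$ to $\chi_{\R\setminus B}|\mu|$ before bounding by $\|g\|_\infty^2$, so that the resulting constant is the total variation $|\mu|(\R\setminus B)$ rather than something involving only $\mu_+$ or $\mu_-$.
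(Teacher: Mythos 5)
Your proof is correct, but it takes a genuinely different route from the paper's. The paper deduces the corollary from Proposition \ref{formestimate} in full: it notes that
$\vert \hat{\mu}(t) - \widehat{\chi_B \mu}(t) \vert = \vert \int_{\R} e^{itx} \chi_{\R\setminus B}(x)\, d\mu(x) \vert \le \vert\mu\vert(\R\setminus B)$
uniformly in $t$, and then feeds this uniform bound into the Fourier-side estimate, where the constant emerges as
$\frac{2}{\sqrt{\pi}} \bigl( \int_{\R} \frac{dt}{1+t^2} \bigr)^{1/2} = \frac{2}{\sqrt{\pi}} \cdot \sqrt{\pi} = 2$.
You instead bypass the Fourier transform entirely: after cancelling the kinetic terms (you quote only the identity part of Proposition \ref{formestimate}, not its estimate), you bound the potential difference by $\Vert g \Vert_\infty^2 \, \vert\mu\vert(\R\setminus B)$, handling the signed measure correctly via $\vert \chi_{\R\setminus B}\mu \vert = \chi_{\R\setminus B}\vert\mu\vert$, and then invoke the Sobolev inequality (\ref{sobolevineq}) with $\varepsilon = 2$, so your constant $2$ comes from balancing the Sobolev coefficients rather than from integrating $(1+t^2)^{-1}$. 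Your argument is more elementary and self-contained, needing only the Sobolev embedding, whereas the paper's proof is a one-line corollary of machinery it has already built (and which the rest of the paper uses anyway for its error estimates). Both approaches land on the same constant, so nothing is lost either way; your version has the mild advantage of showing that the truncation estimate requires no Fourier-analytic input at all.
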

\begin{proof} For every $t\in \R$ 
\[ \vert \hat{\mu}(t) - \widehat{\chi_B \mu}(t) \vert = \vert \int_{\R} e^{itx} 1_{\R\setminus B}(x) d\mu(x) \vert
\le \vert \mu \vert (\R\setminus B), \]
and hence the corollary follows from the previous theorem with $\nu=1_B\mu$.  
\end{proof}


\section{Weak approximation of measures by pure point measures}\label{approximationofmeasures}
Given a finite Radon measure $\mu$ on $\mathcal{B}(\mathbb{R})$, we want to give a constructive way to approximate it by pure point measures. For this it is justified by Corollary \ref{trunc} to assume that $\mu$ has compact support. One can of course decompose $\mu$ into a continuous and a discrete part,
\[ \mu = \mu_c + \mu_d \]
where $\mu_d = \sum_{j=1}^\infty \alpha_j \delta_{x_j}$ for pairwise different $x_j \in \mathbb{R}$, $(\alpha_j) \in \ell^1(\mathbb{N}, \mathbb{R})$ and $\mu_c$ has a continuous cumulative distribution function $F_{\mu_c}(x) = \mu_c((-\infty, x])$. Both parts will be approximated separately. First observe that, using $\mu_d^n =  \sum_{j=1}^n \alpha_j \delta_{x_j}$, one easily sees that $\mu_d^n$ converges weakly to $\mu_d$ and the following error estimate for the Fourier transforms of the measures follows directly:
\begin{align*}
|\hat{\mu}_d(t) - \hat{\mu}_d^n(t)| \leq \sum_{j=n+1}^\infty |\alpha_j|
\end{align*}
Now let us assume that the measure $\mu$ is purely continuous. Further, 
assume  that $\mu$ is supported inside the interval $[-K, K]$ for some sufficiently large $K>0$. For convenience we may assume that $K \in \mathbb{N}$. For $N \in \mathbb{N}$ set
\begin{align*}
x_j^N &:= -K + \frac{j}{N}, \quad j = 0, \dots, 2NK,\\
a_j^N(\mu) &:= F_\mu(x_j) - F_\mu(x_{j-1}) = \mu(x_{j-1}, x_j), \quad j = 1, \dots, 2NK
\end{align*}
and
\[ \mu_N := \sum_{j=1}^{2NK} a_j^N(\mu) \delta_{x_j^N}. \]
It is easy to verify that $F_{\mu_N}$ converges pointwise to $F_{\mu}$ and hence $\mu_N$ converges weakly to $\mu$. The following proposition gives an error estimate in terms of the Fourier transforms:
\begin{prop}
Let $\mu$ be a finite Radon measure on $\mathbb{R}$ with $\operatorname{supp}(\mu) \subset [-K, K]$ and continuous cumulative distribution function and let $\mu_N$ be constructed as above. Let $t \in \mathbb{R} \setminus \{ 0\}$ and $\varepsilon > 0$. Choose  $N \in \mathbb{N}$ such that $\frac{1}{N}< \frac{\pi}{2|t|}\min \{ 1, \frac{\varepsilon^2}{2}\}$. Then
\begin{align*}
|\widehat{\mu}(t) - \widehat{\mu}_N(t)| \leq \varepsilon|\mu|(\mathbb{R}).
\end{align*}
\end{prop}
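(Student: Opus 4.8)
The plan is to compute the difference of the two Fourier transforms directly and reduce everything to an elementary estimate of $|e^{i\theta}-1|$; note that this is a pointwise-in-$t$ statement, so Proposition~\ref{formestimate} (which gives only an $L^2$-type bound in $t$) is not used here.

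First I would exploit that $F_\mu$ is continuous, so $\mu$ carries no atoms and the endpoint conventions in the partition are irrelevant. Since $x_0^N = -K$ and $x_{2NK}^N = K$, the half-open intervals $I_j := (x_{j-1}^N, x_j^N]$, $j = 1,\dots,2NK$, partition $(-K,K]$, which (as $\mu(\{-K\})=0$ and $\operatorname{supp}\mu\subset[-K,K]$) carries all of $\mu$. Writing both transforms over this partition,
\[ \widehat{\mu}(t) = \sum_{j=1}^{2NK}\int_{I_j} e^{itx}\,d\mu(x), \qquad \widehat{\mu}_N(t) = \sum_{j=1}^{2NK} a_j^N(\mu)\, e^{itx_j^N} = \sum_{j=1}^{2NK}\int_{I_j} e^{itx_j^N}\,d\mu(x), \]
where the last identity uses $a_j^N(\mu) = \mu(I_j)$.

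Next I would subtract these two displays and pass to absolute values against $|\mu|$, obtaining
\[ |\widehat{\mu}(t) - \widehat{\mu}_N(t)| \le \sum_{j=1}^{2NK}\int_{I_j} |e^{itx} - e^{itx_j^N}|\,d|\mu|(x) \le \Big(\max_{1\le j\le 2NK}\ \sup_{x\in I_j} |e^{itx}-e^{itx_j^N}|\Big)\,|\mu|(\mathbb{R}). \]
For $x \in I_j$ one has $|x - x_j^N| < 1/N$, and the identity $|e^{i\alpha}-e^{i\beta}| = 2|\sin(\tfrac{\alpha-\beta}{2})|$ gives $|e^{itx}-e^{itx_j^N}| = 2|\sin(\tfrac{t(x-x_j^N)}{2})| \le 2\sin(\tfrac{|t|}{2N})$, the last step being valid because the chosen $N$ keeps the argument inside $[0,\pi/2]$, where $\sin$ is increasing. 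It therefore remains to show that the hypothesis on $N$ forces $2\sin(\tfrac{|t|}{2N}) \le \varepsilon$.

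Finally I would verify this last inequality by splitting according to the two branches of the minimum, which is precisely what the factor $\min\{1,\varepsilon^2/2\}$ encodes. The assumption gives $\tfrac{|t|}{2N} < \tfrac{\pi}{4}\min\{1,\varepsilon^2/2\} \le \tfrac{\pi}{4}$. If $\varepsilon \ge \sqrt{2}$, the minimum is $1$ and $2\sin(\tfrac{|t|}{2N}) < 2\sin(\tfrac{\pi}{4}) = \sqrt{2} \le \varepsilon$. If $\varepsilon < \sqrt{2}$, the minimum is $\varepsilon^2/2$, so $\tfrac{|t|}{2N} < \tfrac{\pi\varepsilon^2}{8}$ and the claim reduces to the scalar inequality $2\sin(\tfrac{\pi\varepsilon^2}{8}) \le \varepsilon$ for $0<\varepsilon<\sqrt{2}$. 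This is where I expect the only genuine work: the crude bound $\sin u \le u$ is not sharp enough near $\varepsilon = \sqrt 2$ (it settles only $\varepsilon \le 4/\pi$), so instead one must use monotonicity of $\sin$ together with the endpoint value $2\sin(\tfrac{\pi}{4}) = \sqrt{2}$, i.e. verify $\sin(\tfrac{\pi\varepsilon^2}{8}) \le \tfrac{\varepsilon}{2}$ directly; this holds with equality at $\varepsilon=\sqrt{2}$ and the difference is seen to be single-signed on $[0,\sqrt2]$ by a short calculus argument. Combining the two cases yields $|\widehat{\mu}(t)-\widehat{\mu}_N(t)| \le \varepsilon\,|\mu|(\mathbb{R})$.
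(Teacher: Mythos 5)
Your proof is correct and follows essentially the same route as the paper's: decompose both Fourier transforms over the partition intervals, bound $|e^{itx}-e^{itx_j^N}|$ by $\varepsilon$ via the choice of $N$, and integrate against $|\mu|$. You in fact go one step further than the paper, which merely asserts that $\frac{\pi}{2|t|}\min\{1,\tfrac{\varepsilon^2}{2}\}$ is an admissible modulus of uniform continuity for $x \mapsto e^{itx}$; your two-case analysis, including the verification of $2\sin\big(\tfrac{\pi\varepsilon^2}{8}\big) \le \varepsilon$ on $(0,\sqrt{2})$ where the crude bound $\sin u \le u$ is insufficient, supplies exactly the justification the paper leaves implicit.
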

\begin{proof}
First observe that, for given $t \neq 0$, $\delta_\varepsilon(t) = \frac{\pi}{2|t|} \min\{1, \frac{\varepsilon^2}{2}\}$ is a possible choice of $\delta$ for the uniform continuity of the function $x \mapsto e^{itx}$ and given $\varepsilon > 0$. The result follows from the following computations:
\begin{align*}
|\widehat{\mu}(t) - \widehat{\mu}_N(t)| &= \Big | \int_{\mathbb{R}} e^{itx} d\mu(x) - \int_{\mathbb{R}} e^{itx} d\mu_N(x) \Big |\\
&= \Big | \int_{\mathbb{R}} e^{itx} d\mu(x) - \sum_{j=1}^{2NK} \mu((x_{j-1}, x_j))e^{itx_j} \Big |\\
&= \Big | \sum_{j=1}^{2NK} \int_{(x_{j-1}, x_j)} e^{itx} - e^{itx_j} d\mu(x) \Big |\\
&\leq \sum_{j=1}^{2NK} \int_{(x_{j-1},x_j)} \underbrace{|e^{itx} - e^{itx_j}|}_{\leq \varepsilon} d|\mu|(x) \\
&\leq \varepsilon |\mu|(\mathbb{R}).
\end{align*}
\end{proof}
Given any $\tilde{K} > 0$ and any $\delta >0$, one can choose $N$ such that 
$|\hat{\mu}(t) - \hat{\mu}_N(t)|< \delta$ for every $t\in [-\widetilde{K},\widetilde{K}]$. In fact, in the preceding proposition  
one simply has to choose $\varepsilon >0$ sufficiently small and then $N$ sufficiently large. 
For the final error estimate, we need not to give bounds for $|\hat{\mu}(t) - \hat{\mu}_N(t)|$, but for the integral occuring in Proposition \ref{formestimate}. This bound is discussed now. Observe that the assumption $|\mu_N|(\mathbb{R}) \leq |\mu|(\mathbb{R})$ is always fulfilled for our construction.
\begin{prop}
Let $\mu$ and $\nu$ be finite Radon measures on $\mathbb{R}$ with $|\nu|(\mathbb{R}) \leq |\mu|(\mathbb{R})$. Let 
$\widetilde{K}$ and $ \varepsilon $ be any positive real numbers and suppose, in addition, that 
$|\hat{\mu}(t) - \hat{\nu}(t)| < \varepsilon$ for every $t$ in the interval $[-\widetilde{K}, \widetilde{K}]$. Then 
\begin{align*}
\int_\mathbb{R} \frac{1}{1+t^2}|\hat{\mu}(t) - \hat{\nu}(t)|^2 dt \leq \varepsilon^2 \pi + 8|\mu|(\mathbb{R})^2 \arctan \Big (\frac{1}{\widetilde{K}} \Big ).
\end{align*} 
\end{prop}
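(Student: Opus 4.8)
The plan is to split the integral at the threshold $\widetilde{K}$, writing
\[ \int_\mathbb{R} \frac{1}{1+t^2}|\hat{\mu}(t) - \hat{\nu}(t)|^2\,dt = \int_{-\widetilde{K}}^{\widetilde{K}} \frac{1}{1+t^2}|\hat{\mu}(t) - \hat{\nu}(t)|^2\,dt + \int_{|t| > \widetilde{K}} \frac{1}{1+t^2}|\hat{\mu}(t) - \hat{\nu}(t)|^2\,dt, \]
and to bound the two pieces by the two respective summands on the right-hand side of the claimed inequality. The guiding observation is that on the central interval the pointwise hypothesis $|\hat{\mu}(t) - \hat{\nu}(t)| < \varepsilon$ provides sharp control, while on the tail the weight $\frac{1}{1+t^2}$ is integrable, so a crude uniform bound coming from the total masses already suffices.

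For the central part I would factor out $\varepsilon^2$ using the hypothesis and then enlarge the domain of integration to all of $\mathbb{R}$:
\[ \int_{-\widetilde{K}}^{\widetilde{K}} \frac{1}{1+t^2}|\hat{\mu}(t) - \hat{\nu}(t)|^2\,dt \le \varepsilon^2 \int_{-\widetilde{K}}^{\widetilde{K}} \frac{dt}{1+t^2} \le \varepsilon^2 \int_{-\infty}^{\infty} \frac{dt}{1+t^2} = \varepsilon^2 \pi, \]
which already produces the first summand.

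For the tail I would use the elementary bound $|\hat{\eta}(t)| = |\int_\mathbb{R} e^{itx}\,d\eta(x)| \le |\eta|(\mathbb{R})$ valid for any finite Radon measure $\eta$, applied to both $\mu$ and $\nu$, together with the assumption $|\nu|(\mathbb{R}) \le |\mu|(\mathbb{R})$, to get $|\hat{\mu}(t) - \hat{\nu}(t)| \le 2|\mu|(\mathbb{R})$ for every $t$. Hence
\[ \int_{|t| > \widetilde{K}} \frac{1}{1+t^2}|\hat{\mu}(t) - \hat{\nu}(t)|^2\,dt \le 4|\mu|(\mathbb{R})^2 \cdot 2\int_{\widetilde{K}}^{\infty} \frac{dt}{1+t^2} = 8|\mu|(\mathbb{R})^2 \Big( \frac{\pi}{2} - \arctan \widetilde{K} \Big). \]
Invoking the identity $\frac{\pi}{2} - \arctan \widetilde{K} = \arctan(1/\widetilde{K})$, valid for $\widetilde{K} > 0$, converts this into the second summand $8|\mu|(\mathbb{R})^2 \arctan(1/\widetilde{K})$, and adding the two bounds yields the assertion.

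I do not expect a genuine obstacle here: the whole argument reduces to the dyadic split plus two standard estimates. The only points requiring a moment of care are recalling that the Fourier transform of a finite signed measure is dominated pointwise by its total variation, and applying the arccotangent identity to cast the tail integral into exactly the stated form.
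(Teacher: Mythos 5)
Your proposal is correct and follows essentially the same route as the paper: split the integral at $\pm\widetilde{K}$, use the pointwise hypothesis with $\int_{\mathbb{R}}\frac{dt}{1+t^2}=\pi$ on the central interval, and the uniform bound $|\hat{\mu}(t)-\hat{\nu}(t)|\le 2|\mu|(\mathbb{R})$ together with $\int_{\widetilde{K}}^{\infty}\frac{dt}{1+t^2}=\arctan(1/\widetilde{K})$ on the tails. The only cosmetic difference is that the paper treats the two tails as separate integrals while you combine them by symmetry; the estimates are identical.
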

\begin{proof}
We use
\begin{align*}
\int_\mathbb{R}\frac{1}{1+t^2}|\hat{\mu}(t) - \hat{\nu}(t)|^2 dt &= \quad \int_{-\infty}^{-\widetilde{K}}\frac{1}{1+t^2}|\hat{\mu}(t) - \hat{\nu}(t)|^2 dt \\
& \quad + \int_{-\widetilde{K}}^{\widetilde{K}}\frac{1}{1+t^2}|\hat{\mu}(t) - \hat{\nu}(t)|^2 dt \\
&\quad + \int_{\widetilde{K}}^{\infty}\frac{1}{1+t^2}|\hat{\mu}(t) - \hat{\nu}(t)|^2 dt
\end{align*}
and estimate the three integrals separately, where the estimates for the first and the third integral are identical. For the first integral
\begin{align*}
\int_{-\infty}^{-\widetilde{K}}\frac{1}{1+t^2}|\hat{\mu}(t) - \hat{\nu}(t)|^2 dt &\leq (2|\mu|(\mathbb{R}))^2 \int_{-\infty}^{-\widetilde{K}} \frac{1}{1+t^2}dt\\
&= 4|\mu|(\mathbb{R})^2 (\arctan(-\widetilde{K}) + \frac{\pi}{2})\\
&= 4 |\mu|(\mathbb{R})^2 \arctan \Big ( \frac{1}{\widetilde{K}} \Big )
\end{align*}
and for the second integral
\begin{align*}
\int_{-\widetilde{K}}^{\widetilde{K}}\frac{1}{1+t^2}|\hat{\mu}(t) - \hat{\nu}(t)|^2 dt &\leq \varepsilon^2 \int_{-\widetilde{K}}^{\widetilde{K}} \frac{1}{1+t^2} dt\\
&\leq \varepsilon^2 \pi.
\end{align*}
\end{proof}

\section{Eigenvalues of \texorpdfstring{$-\Delta + \sum_j \alpha_j \delta_{x_j}$}{-Delta + sum alpha j delta xj}}\label{sec8}
Let $x_1 < x_2 < \hdots < x_k$ and $\alpha_j \in \mathbb{R}$ for $j=1,2,\ldots,k$. 
As it is well known, the domain of the operator $-\Delta + \sum_{j=1}^k \alpha_j \delta_{x_j}$ (using the notation $x_0 = -\infty$ and $x_{k+1} = \infty$)
is given by
\begin{align*}
D(-\Delta + \sum_{j=1}^k \alpha_j \delta_{x_j}) = \Big \{ f& \in H^1(\mathbb{R}) \cap  \bigoplus_{j=0}^{k} H^2((x_j, x_{j+1})); \\
&f'(x_j+) - f'(x_j-) = \alpha_j f(x_j) \text{ for } j=1, \dots, k \Big \}
\end{align*}
and the operator is just acting as $f \mapsto -f''$. As it can easily be seen, such operators can only have negative eigenvalues. For the readers convenience we will now repeat an algorithm to find the eigenvalues of such an operator (cf. \cite[proof of Theorem II 2.1.3]{Albeverio}). Each eigenfunction $f_\lambda$ for an eigenvalue $\lambda < 0$ of the operators needs to be of the form
\begin{align*}
f_\lambda(x) = a_j e^{\sqrt{-\lambda}x} + b_j e^{-\sqrt{-\lambda}x}, \quad x_j < x < x_{j+1}
\end{align*}
for some constants $a_j, b_j \in \mathbb{R}$. Since the function needs to be in $L^2(\mathbb{R})$, it is necessary that $b_0 = 0$. By linearity we may also assume $a_0 = 1$ (since $a_0 = 0$ would imply $f_\lambda = 0$). Such a function is in the domain of the operator (and hence an eigenfunction) if and only if it satisfies all the continuity and $\delta$-boundary conditions, i.e.
\begin{align*}
f_\lambda (x_j+) &= f_\lambda (x_j-),\\
f_\lambda'(x_j+) - f_\lambda'(x_j-) &= \alpha_j f_\lambda(x_j),
\end{align*}
for $j = 1, \dots, k$ (and $a_k = 0$, which we will ignore for a short moment). Starting with $a_0 = 1, b_0 = 0$, these two conditions give a $2 \times 2$ system of linear equations at $x_1$ for $a_1, b_1$, which one can easily solve. Continuing with the conditions at $x_2$, one gets equations for $a_2, b_2$, and so on. In the end, one computes all the $a_j$ and $b_j$ such that the continuity and boundary-conditions are automatically fulfilled. However, in general one will obtain $a_k \neq 0$. One can easily see that $\lambda$ is an eigenvalue if and only if the $a_k$ obtained by this method is equal to 0 
(only then we have  $f_\lambda \in L^2(\mathbb{R})$). Hence, we consider the $a_k$ computed in the above manner as a function of $\lambda$. This $a_k(\lambda)$ depends continuously on $\lambda < 0$. The problem of finding a negative eigenvalue of $- \Delta + \sum\alpha_j \delta_{x_j}$ reduces to finding the zero of the continuous real-valued function $a_k(\lambda)$, which may be solved numerically. Observe that the runtime of the evaluation of $a_k(\lambda)$ depends only linearly on $k$ (at each $x_j$ we need to solve a $2\times 2$ system of linear equations).

\section{Existence of negative eigenvalues}
It is crucial for our approximation method that there exist negative eigenvalues of the approximating operator.
Hence we will add a short discussion about the existence of negative eigenvalues for operators $-\Delta + \mu$. Recall (Theorem \ref{essspectrum} above) that $\sigma_{ess}(-\Delta + \mu) = [0, \infty)$ for each finite Radon measure $\mu$. In light of this result, for the existence of a negative eigenvalue of $-\Delta + \mu$ it suffices to show that there is a function $f \in H^1(\mathbb{R})$ with $a_\mu(f, f) < 0$. This directly gives the following well known result, which we quickly prove for completeness.
\begin{prop}\label{theorem8}
If the finite Radon measure $\mu$ satisfies $\mu(\mathbb{R}) < 0$, then  $-\Delta + \mu$ has at least one negative eigenvalue.
\end{prop}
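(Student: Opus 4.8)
The plan is to exploit the reduction already noted just before the statement: since $\sigma_{ess}(-\Delta+\mu) = [0,\infty)$ by Theorem \ref{essspectrum}, it suffices to produce a single test function $f \in H^1(\mathbb{R})$ with $a_\mu(f,f) < 0$. Indeed, by the variational characterization of the bottom of the spectrum, $\min \sigma(-\Delta + \mu) = \inf_{0 \neq g \in H^1(\mathbb{R})} a_\mu(g,g)/\|g\|_{L^2}^2$, so such an $f$ forces $\min \sigma(-\Delta+\mu) < 0$; lying strictly below the essential spectrum $[0,\infty)$, this infimum is then an isolated point of the spectrum of finite multiplicity, i.e. a negative eigenvalue.

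To construct the test function I would use a family of spread-out plateau functions. For $n \in \mathbb{N}$ let $f_n$ be the continuous, piecewise linear function equal to $1$ on $[-n,n]$, equal to $0$ outside $[-2n,2n]$, and affine on the two transition intervals $[-2n,-n]$ and $[n,2n]$. Each $f_n$ is compactly supported and Lipschitz, hence lies in $H^1(\mathbb{R})$.

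The two terms of $a_\mu(f_n,f_n) = \int_\mathbb{R} |f_n'|^2\,dx + \int_\mathbb{R} |f_n|^2\,d\mu$ are then estimated separately. For the kinetic term, $f_n'$ equals $\pm 1/n$ on the two transition intervals (each of length $n$) and vanishes elsewhere, so $\int_\mathbb{R} |f_n'|^2\,dx = 2/n \to 0$. For the potential term, note that $0 \le f_n^2 \le 1$ and $f_n^2 \to 1$ pointwise; applying dominated convergence to the finite measures $\mu_+$ and $\mu_-$ of the Hahn-Jordan decomposition gives $\int_\mathbb{R} |f_n|^2\,d\mu \to \mu(\mathbb{R})$. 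Hence $a_\mu(f_n,f_n) \to \mu(\mathbb{R}) < 0$, so for all sufficiently large $n$ the function $f = f_n$ satisfies $a_\mu(f,f) < 0$, which by the reduction above completes the argument.

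There is no serious obstacle here, as this is a standard variational (trial-function) argument balancing a vanishing kinetic energy against a strictly negative total mass. The only point requiring genuine care is the passage $\int_\mathbb{R} |f_n|^2\,d\mu \to \mu(\mathbb{R})$: dominated convergence does not apply directly to a signed measure, so one must split into the positive and negative parts $\mu_+,\mu_-$ and use that the constant $1$ is integrable against each of these finite measures. One should also take care that the plateau be wide enough to eventually cover any prescribed compact set, which is automatic here since $f_n \equiv 1$ on $[-n,n]$ and $n \to \infty$.
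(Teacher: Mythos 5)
Your proof is correct and follows essentially the same route as the paper: the paper's Figure~1 depicts exactly your trapezoidal plateau function $f_N$ (equal to $1$ on $[-N,N]$, affine down to $0$ at $\pm 2N$), and its proof likewise combines the reduction via $\sigma_{ess}(-\Delta+\mu)=[0,\infty)$ with the observation that $a_\mu(f_N,f_N)<0$ for $N$ large. Your write-up merely fills in the details the paper leaves implicit (the kinetic term $2/n\to 0$ and the dominated convergence applied separately to $\mu_+$ and $\mu_-$), which is a welcome clarification but not a different argument.
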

\begin{proof}
For $N \in \mathbb{N}$ consider the function $f_N$ as pictured in Figure 1. For $N$ large enough,
\begin{align*}
a_\mu(f_N, f_N) = \int_\mathbb{R}|f'_N(x)|^2 dx + \int_\mathbb{R}|f_N|^2 d\mu < 0.
\end{align*}
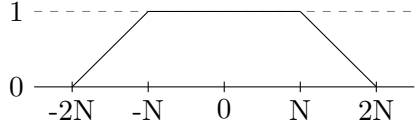
\begin{figure}
\begin{center}
\begin{tikzpicture}
\draw[-] (-2.5,0) to (2.5,0);
\draw[dashed, gray] (-2.5,1) to (2.5,1);
\draw[-] (-2,0) to (-1,1);
\draw[-] (-1,1) to (1,1);
\draw[-] (1,1) to (2,0);
\draw[-] (2,2pt) to (2,-2pt);
\draw[-] (1,2pt) to (1,-2pt);
\draw[-] (-1,2pt) to (-1,-2pt);
\draw[-] (-2,2pt) to (-2,-2pt);
\draw[-] (0,2pt) to (0,-2pt);
\node[anchor=east] at (-2.5,1) {1};
\node[anchor=east] at (-2.5,0) {0};
\node[anchor=north] at (-2,-1pt) {-2N};
\node[anchor=north] at (-1,-1pt) {-N};
\node[anchor=north] at (0,-1pt) {0};
\node[anchor=north] at (1,-1pt) {N};
\node[anchor=north] at (2,-1pt) {2N};
\end{tikzpicture}
\end{center}
\caption{The function $f_N$}
\end{figure}
\end{proof}
\begin{rem}
If $\mu$ is a finite Radon measure with $\mu(\R) < 0$ and $(\mu_n)$ is a sequence of such measures converging weakly to $\mu$, then it is easy to see that $\mu_n(\mathbb{R}) < 0$ for large $n$. In particular, if the approximation is good enough, then there exists a negative approximating eigenvalue of the Schr\"odinger operator $-\Delta + \mu_n$ approximating $-\Delta + \mu$. Therefore, under the assumption $\mu(\R)<0$ our approximation scheme described above always works.
\end{rem}
We will discuss some other results concerning the existence of negative eigenvalues here. First, we will recall some well known results extending the above Proposition \ref{theorem8} in some sense.
\begin{thm}[{\cite[Lemma 8]{Schmincke}}]
If $V \in C(\mathbb{R})$ with $V(x) \to 0$ as $|x| \to \infty$, $V \not \equiv 0$ and $\int_\mathbb{R} V(x) dx \leq 0$, then $-\Delta + V$ has at least one negative eigenvalue.
\end{thm}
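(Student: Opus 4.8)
The plan is to reduce the claim, via the variational (min--max) principle, to exhibiting a single trial function of negative energy, exactly as in the proof of Proposition~\ref{theorem8}. Since $V$ is continuous, tends to $0$ at infinity and the integral $\int_{\mathbb{R}}V\,dx$ is a well-defined real number $\le 0$, the potential $V$ is integrable, so $V\,dx$ is a finite Radon measure and Theorem~\ref{essspectrum} applies: $\sigma_{ess}(-\Delta+V)=[0,\infty)$. Hence it suffices to find $f\in H^1(\mathbb{R})$ with
\[
a_V(f,f)=\int_{\mathbb{R}}|f'|^2\,dx+\int_{\mathbb{R}}|f|^2 V\,dx<0,
\]
because then the bottom of the spectrum, being the infimum of the Rayleigh quotient, is strictly negative and is therefore a genuine eigenvalue lying below the essential spectrum.

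If $\int_{\mathbb{R}}V\,dx<0$ this is precisely Proposition~\ref{theorem8}: for the trial functions $f_N$ used in its proof one has $\int|f_N'|^2\,dx=2/N\to 0$, while $\int|f_N|^2 V\,dx\to\int V\,dx$ by dominated convergence (here $|f_N|^2\le 1$, $f_N\to 1$ pointwise, and $V\in L^1$), so $a_V(f_N,f_N)\to\int V\,dx<0$ and any large $N$ works. The substantive case, and the one I regard as the main obstacle, is the borderline $\int_{\mathbb{R}}V\,dx=0$, where the same computation only yields $a_V(f_N,f_N)\to 0$ and strict negativity is lost. Here one must exploit the hypothesis $V\not\equiv 0$.

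To treat $\int_{\mathbb{R}}V\,dx=0$, first note that a continuous $V\not\equiv 0$ with vanishing integral cannot be non-negative everywhere, so $V$ is strictly negative on some interval; choose $\psi\in C_0^\infty(\mathbb{R})$ with $\psi\ge 0$ supported there, so that $\beta:=-\int\psi^2 V\,dx>0$. I would then test with $f_N+\varepsilon\psi$. Taking $N$ so large that $\supp\psi\subset(-N,N)$, where $f_N\equiv 1$ and $f_N'\equiv 0$, the cross term collapses to $a_V(f_N,\psi)=\int\psi V\,dx=-\beta$, giving
\[
a_V(f_N+\varepsilon\psi,\,f_N+\varepsilon\psi)=\eta_N-2\beta\varepsilon+C\varepsilon^2,
\]
with $\eta_N:=a_V(f_N,f_N)\to 0$ and $C:=a_V(\psi,\psi)$ a fixed constant. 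The crux is that the gain $-2\beta\varepsilon$ is linear in $\varepsilon$ and dominates the bump's own cost $C\varepsilon^2$ for small $\varepsilon$: if $C>0$ the right-hand side is minimized at $\varepsilon=\beta/C$ with value $\eta_N-\beta^2/C$, and if $C\le 0$ it tends to $-\infty$ as $\varepsilon\to\infty$. In either case, since $\eta_N\to 0<\beta^2/C$, one first fixes $\varepsilon$ making the $\varepsilon$-dependent part equal to some strictly negative $-\rho$, and then chooses $N$ large enough that $\eta_N<\rho$; this yields $a_V(f_N+\varepsilon\psi,f_N+\varepsilon\psi)<0$ and completes the proof.
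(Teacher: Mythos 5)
Your argument is correct in substance, but note first that the paper itself does not prove this statement at all: it is quoted from the literature as \cite[Lemma 8]{Schmincke}, so there is no internal proof to compare against, and what you have produced is a self-contained argument inside the paper's own framework --- genuinely more than the paper offers. Your reduction (Theorem \ref{essspectrum} gives $\sigma_{ess}(-\Delta+V)=[0,\infty)$, so a single trial function with $a_V(f,f)<0$ suffices) is exactly the paper's strategy for Proposition \ref{theorem8}, and your case $\int_{\mathbb{R}}V\,dx<0$ is literally that proposition. The real content is your treatment of the borderline case $\int_{\mathbb{R}}V\,dx=0$: continuity of $V$ together with $V\not\equiv 0$ forces $V<0$ on an interval, and the trial function $f_N+\varepsilon\psi$ wins because the cross term is linear in $\varepsilon$ while the cost of the bump is quadratic; this weak-coupling trick is carried out correctly, including the quantifier order ($\beta$ and $C$ do not depend on $N$ once $\supp\psi\subset(-N,N)$, so one may fix $\varepsilon$ first and then take $N$ large). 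Two blemishes, neither a real gap: (i) you define $\beta=-\int\psi^2V\,dx$ but then use $a_V(f_N,\psi)=\int\psi V\,dx=-\beta$; the cross term is indeed $\int f_N\psi V\,dx=\int\psi V\,dx$, not $\int\psi^2V\,dx$, so you should simply set $\beta:=-\int\psi V\,dx$, which is positive for the same reason ($\psi\ge 0$, $\psi\not\equiv 0$, $V<0$ on $\supp\psi$). (ii) Your opening claim that continuity plus decay at infinity yields integrability is not a valid deduction --- $\sin(x)/(1+|x|)$ is continuous, vanishes at infinity, and has a convergent improper integral but is not in $L^1(\mathbb{R})$; instead one must read the hypothesis $\int_{\mathbb{R}}V\,dx\le 0$ as a finite Lebesgue integral, i.e. $V\in L^1(\mathbb{R})$, which is in any case required for $V\,dx$ to be a finite Radon measure, for Theorem \ref{essspectrum} to apply, and for your dominated convergence step.
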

\begin{thm}[{\cite[Theorem 2.5]{Simon}}]
Let $V(x)$ be a measurable function such that $\int_\mathbb{R} (1 + |x|^2)|V(x)|dx < \infty$, $V \not \equiv 0$ and $\int_\mathbb{R} V(x) dx \leq 0$. Then $-\Delta + V$ has at least one negative eigenvalue.
\end{thm}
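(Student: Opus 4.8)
My plan is to stay entirely inside the variational framework set up at the beginning of this section. Since $\int_\mathbb{R}(1+|x|^2)|V(x)|\,dx < \infty$ implies $\int_\mathbb{R}|V|\,dx < \infty$, the measure $\mu = V\,dx$ is a finite Radon measure, so by Theorem \ref{essspectrum} we have $\sigma_{ess}(-\Delta+\mu) = [0,\infty)$. By the min-max principle it therefore suffices to exhibit a single trial function $f \in H^1(\mathbb{R})$ with $a_\mu(f,f) = \int_\mathbb{R}|f'|^2\,dx + \int_\mathbb{R}|f|^2 V\,dx < 0$; this forces $\min\sigma(-\Delta+\mu) < 0 = \min\sigma_{ess}$, hence a negative eigenvalue below the essential spectrum. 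This is exactly the reduction announced at the start of the section and the same strategy as in Proposition \ref{theorem8}.

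If $\int_\mathbb{R} V\,dx < 0$ strictly, then Proposition \ref{theorem8} already applies (a wide tent function does the job), so the real content is the borderline case $\int_\mathbb{R}V\,dx = 0$ with $V \not\equiv 0$. Here the obvious candidates fail: a tent $\chi_L$ that equals $1$ on $[-L,L]$ and decays linearly to $0$ on $L \le |x| \le 2L$ has kinetic energy $\int|\chi_L'|^2 = 2/L$, while $\int\chi_L^2 V \to \int V = 0$ by dominated convergence, so that $a_\mu(\chi_L,\chi_L) = 2/L + o(1)$ only tends to $0$ from above and the positive kinetic cost is never beaten. This is the main obstacle; it is genuinely a second-order effect that the leading (constant) trial function cannot detect.

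The key idea I would use is to add a first-order perturbation that the nontrivial potential can see. Since $V \not\equiv 0$, the functional $w \mapsto \int_\mathbb{R} wV\,dx$ on $C_0^\infty(\mathbb{R})$ is not identically zero, so after a choice of sign I may fix $w \in C_0^\infty(\mathbb{R})$, say with $\supp w \subset [-R,R]$, such that $\int_\mathbb{R} wV\,dx < 0$. For $L > R$ and a small parameter $\varepsilon > 0$ I would take $f = \chi_L + \varepsilon w$. Because $\supp w$ lies in the flat part of the tent, the cross terms decouple ($\chi_L'$ and $w'$ have disjoint supports, and $\chi_L \equiv 1$ on $\supp w$), giving $\int|f'|^2 = 2/L + \varepsilon^2\|w'\|_{L^2}^2$ and $\int f^2 V = \int\chi_L^2 V + 2\varepsilon\int wV + \varepsilon^2\int w^2 V$, so that
\[ a_\mu(f,f) = \frac{2}{L} + \int_\mathbb{R}\chi_L^2 V\,dx + 2\varepsilon\int_\mathbb{R} wV\,dx + \varepsilon^2\Big(\|w'\|_{L^2}^2 + \int_\mathbb{R} w^2 V\,dx\Big). \]
The decisive term is the linear one $2\varepsilon\int wV\,dx < 0$: I would first fix $\varepsilon > 0$ small enough that it dominates the $\varepsilon^2$-bracket, making the $\varepsilon$-dependent part a fixed negative number $-\gamma$, and then let $L \to \infty$, so that $2/L \to 0$ and $\int\chi_L^2 V \to \int V \le 0$ by dominated convergence. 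Hence $a_\mu(f,f) \to \int V - \gamma \le -\gamma < 0$, and $a_\mu(f,f) < 0$ for $L$ large.

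I would close by remarking that this argument in fact uses only $V \in L^1(\mathbb{R})$ with $\int V \le 0$ and $V \not\equiv 0$; the second moment $\int x^2|V|\,dx < \infty$ in the hypothesis is not needed for the mere existence of a bound state (it is relevant for the sharp weak-coupling asymptotics of the eigenvalue). The same trial function recovers Proposition \ref{theorem8} upon taking $\varepsilon = 0$, so the construction unifies both cases.
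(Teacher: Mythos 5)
Your proof is correct, and it should be said up front that the paper itself contains no proof of this statement: it is quoted verbatim from \cite{Simon}, so there is no internal argument to compare against. Simon's own proof proceeds by a genuinely different route, namely the Birman--Schwinger principle: one studies when the largest eigenvalue of the kernel $|V|^{1/2}(-\Delta+k^2)^{-1}|V|^{1/2}\operatorname{sgn}(V)$ equals $1$, and the second-moment hypothesis $\int_{\mathbb{R}}x^2|V(x)|\,dx<\infty$ is precisely what controls the expansion of that kernel near $k=0$; in return that method yields much more than existence (the sharp weak-coupling asymptotics of the eigenvalue of $-\Delta+\lambda V$ and an ``if and only if'' characterization). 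Your variational argument is more elementary and, as you note, proves a formally stronger existence statement, since it uses only $V\in L^1(\mathbb{R})$, $V\not\equiv 0$, $\int_{\mathbb{R}}V\,dx\le 0$. The two-parameter trial function $\chi_L+\varepsilon w$ is exactly the right repair of Proposition \ref{theorem8} in the borderline case $\int_{\mathbb{R}}V\,dx=0$: the tent alone genuinely fails there (for the potentials described in the paper just after the theorem, with $\int_{[-c,c]}V\,dx>0$ for all $c>0$, the layer-cake formula shows $\int\chi_L^2V\,dx>0$ for every $L$), and your linear term $2\varepsilon\int_{\mathbb{R}}wV\,dx<0$ supplies the missing first-order effect; the support decoupling makes the algebra exact, and the order of the choices ($w$ first, then $\varepsilon$, then $L\to\infty$) involves no circularity. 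Two further remarks. First, the reduction you use --- a single trial function with $a_\mu(f,f)<0$ suffices because $\sigma_{ess}=[0,\infty)$ by Theorem \ref{essspectrum} --- is exactly the paper's announced strategy, so your proof sits naturally in this section. Second, and more interestingly, since your argument sees $V$ only through the functional $w\mapsto\int wV\,dx$ and dominated convergence, it carries over verbatim to a finite Radon measure $\mu\neq 0$ with $\mu(\mathbb{R})\le 0$ (choose $w\in C_0^\infty(\mathbb{R})$ with $\int_{\mathbb{R}}w\,d\mu<0$, which exists by Riesz representation and density), which bears directly on the question the authors explicitly leave open in the paragraph following this theorem.
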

It is unknown to the authors whether the above two results extend to one-dimensional Schr\"odinger operators with measure potentials. If similar results for Schr\"odinger operators with point interaction potentials hold, this would increase the applicability of our method. \\
It is easy to construct a continuous function $V \in L^1(\mathbb{R})$ such that $\int_\mathbb{R} V(x) dx = 0$ but $\int_{[-c, c]} V(X)dx > 0$ for all $c > 0$ and the assumptions for one of the above two theorems are fulfilled. Then $-\Delta + V$ has a negative eigenvalue, but Proposition \ref{theorem8} is not sufficient anymore to guarantee the existence of negative eigenvalues of the approximating Schr\"odinger operators. Hence, we will now discuss some other existence criteria for negative eigenvalues of operators $-\Delta + \sum_{j} \alpha_j \delta_{x_j}$ which are not covered by the above results. Although we are particularly interested in the case of finitely many point interactions, the same results hold in case for infinitely many point interactions by the same proofs, i.e if we assume $\alpha_j, x_j \in \mathbb{R}$ for all $j \in \mathbb{Z}$, $\sum_{j = -\infty}^\infty |\alpha_j| < \infty$ and the sequence $(x_j)_j$ is increasingly ordered. Nevertheless, we only formulate the results for finitely many point interactions.
\begin{prop}
Let $\alpha_j, x_j \in \mathbb{R}$ for $j = 1, \dots, n$ and assume $x_1 < x_2 < \hdots < x_n$. Further, assume $a_k < 0$ for some $k \in \{ 1, \dots, n\}$ and denote $d_k^{-} = x_k - x_{k-1}$ and $d_k^{+} = x_{k+1} - x_k$ (with $d_k^{-} = \infty$ for $k=1$ and $d_k^{+} = \infty$ for $k = n$). If
\begin{align*}
\frac{1}{d_k^{-}} + \frac{1}{d_k^{+}} < -\alpha_k,
\end{align*}
then $-\Delta + \sum_{j=1}^n \alpha_j \delta_{x_j}$ has at least one negative eigenvalue.
\end{prop}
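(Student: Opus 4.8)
The plan is to exploit the structure of the essential spectrum. By Theorem~\ref{essspectrum} applied to $\mu = \sum_{j=1}^n \alpha_j \delta_{x_j}$ we have $\sigma_{ess}(-\Delta+\mu) = [0,\infty)$, so as already noted in the paragraph preceding Proposition~\ref{theorem8}, it suffices to exhibit a single function $f \in H^1(\mathbb{R})$ with $a_\mu(f,f) < 0$. Indeed, the min-max principle then gives $\min\sigma(-\Delta+\mu) \le a_\mu(f,f)/\|f\|_{L^2}^2 < 0$, so the bottom of the spectrum lies strictly below the essential spectrum and is therefore an isolated negative eigenvalue. Thus the entire problem reduces to a good choice of trial function.

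For the interior case $2 \le k \le n-1$ I would take the tent (piecewise linear) function $f$ that vanishes outside $(x_{k-1}, x_{k+1})$, equals $1$ at $x_k$, and interpolates linearly on each of $(x_{k-1}, x_k)$ and $(x_k, x_{k+1})$. This $f$ belongs to $H^1(\mathbb{R})$ and, crucially, vanishes at every interaction point except $x_k$, so all terms $\alpha_j |f(x_j)|^2$ with $j \ne k$ drop out of the form. The slope on the left subinterval is $1/d_k^-$ and on the right $1/d_k^+$, so a one-line computation gives $\int_{\mathbb{R}} |f'|^2\,dx = \tfrac{1}{d_k^-} + \tfrac{1}{d_k^+}$, and with $f(x_k) = 1$ we obtain
\[ a_\mu(f,f) = \frac{1}{d_k^-} + \frac{1}{d_k^+} + \alpha_k < 0, \]
which is negative precisely by the hypothesis $\tfrac{1}{d_k^-} + \tfrac{1}{d_k^+} < -\alpha_k$.

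The main (and essentially only) subtlety is the boundary case $k=1$, where $d_1^- = \infty$, so that the left half of the tent would have infinite width and fail to be square-integrable. I would handle this by replacing the left branch with a slowly decaying exponential: set $f(x) = e^{\varepsilon(x - x_1)}$ for $x < x_1$ and keep the linear descent from $1$ at $x_1$ to $0$ at $x_2$. Then $f \in H^1(\mathbb{R})$, one still has $f(x_1) = 1$ and $f(x_j) = 0$ for $j \ge 2$, and the extra kinetic cost is $\int_{-\infty}^{x_1} |f'|^2\,dx = \varepsilon/2$. Hence $a_\mu(f,f) = \tfrac{\varepsilon}{2} + \tfrac{1}{d_1^+} + \alpha_1$, which is negative for $\varepsilon$ small enough since $\tfrac{1}{d_1^+} + \alpha_1 < 0$ (using $1/d_1^- = 0$). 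The case $k=n$ is symmetric, with the exponential tail placed on the right. I expect this degenerate-width case, rather than the core estimate, to be the only point requiring care.
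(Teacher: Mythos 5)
Your proposal is correct and follows essentially the same route as the paper: reduce to exhibiting $f \in H^1(\mathbb{R})$ with $a_\mu(f,f)<0$ via the essential-spectrum result, then use exactly the tent function supported on $(x_{k-1},x_{k+1})$ with value $1$ at $x_k$, giving $a_\mu(f,f)=\frac{1}{d_k^-}+\frac{1}{d_k^+}+\alpha_k<0$. The only difference is that the paper dismisses the degenerate cases $d_k^\pm=\infty$ with ``work similarly,'' whereas you spell out a valid fix (an exponential tail of kinetic cost $\varepsilon/2$), which is a welcome extra detail rather than a deviation.
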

\begin{figure}
\begin{center}
\begin{tikzpicture}
\draw[-] (-2.5,0) to (2.5,0);
\draw[dashed, gray] (-2.5,1) to (2.5,1);
\draw (-2,0) to (0,1);
\draw (0,1) to (1.5,0);
\draw (-2,2pt) to (-2,-2pt);
\draw (0,-2pt) to (0, 2pt);
\draw (1.5, -2pt) to (1.5,2pt);
\node[anchor=east] at (-2.5,1) {1};
\node[anchor=east] at (-2.5,0) {0};
\node[anchor=north] at (-2,-1pt) {$x_{k-1}$};
\node[anchor=north] at (0,-1pt) {$x_k$};
\node[anchor=north] at (1.5,-1pt) {$x_{k+1}$};
\end{tikzpicture}
\end{center}
\caption{The function $f_{x_k}$}
\end{figure}
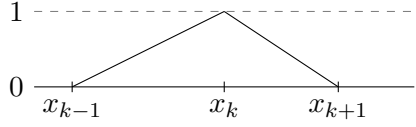
\begin{proof}
For $d_k^{-}, d_k^{+} < \infty$ consider $f_{x_k}$ as pictured in Figure 2. Then
\begin{align*}
a_{\sum_j \alpha_j \delta_{x_j}}(f_{x_k}, f_{x_k}) &= \int_\mathbb{R} |f_{x_k}'(x)|^2 dx + \sum_{j=1}^n \alpha_j |f_{x_k}(x_j)|^2\\
&= \frac{1}{d_k^{-}} + \frac{1}{d_k^{+}} + \alpha_k < 0.
\end{align*}
The cases $d_k^{-} = \infty$ or $d_k^{+} = \infty$ work similarly.
\end{proof}
\begin{prop}
Let $\alpha_j, x_j \in \mathbb{R}$ for $j =1, \dots, n$ and assume $x_1 < \hdots < x_n$. Further, assume $a_k < 0$ for some $k \in \{ 1, \dots, n\}$. If
\begin{align*}
\frac{\alpha_k}{2}e^{-\alpha_k x_k} + \sum_{x_j < x_k} \alpha_je^{-\alpha_k x_j} + e^{-2\alpha_k x_k} \sum_{x_j > x_k} \alpha_j e^{\alpha_k x_j} < 0,
\end{align*}
then $-\Delta + \sum_j \alpha_j \delta_{x_j}$ has at least one negative eigenvalue.
\end{prop}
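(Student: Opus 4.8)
The plan is to follow the same strategy as the preceding propositions in this section: since $\sigma_{ess}(-\Delta + \sum_j \alpha_j \delta_{x_j}) = [0,\infty)$ by Theorem \ref{essspectrum}, it suffices to exhibit a single trial function $f \in H^1(\mathbb{R})$ with $a_{\sum_j \alpha_j \delta_{x_j}}(f,f) < 0$. The existence of such an $f$ forces $\min\sigma(-\Delta + \sum_j \alpha_j\delta_{x_j}) < 0$ via the min-max principle, and any spectral value below $0$ is necessarily an isolated eigenvalue of finite multiplicity.

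The crucial step is the right choice of trial function. Instead of the tent function used in the previous proposition, I would take the ground state of the operator carrying only the single point interaction $\alpha_k \delta_{x_k}$. Putting $\kappa := -\alpha_k/2$, which is strictly positive precisely because $\alpha_k < 0$, this is the function $f(x) = e^{-\kappa|x - x_k|}$; it clearly lies in $H^1(\mathbb{R})$.

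Next I would simply evaluate the form on this $f$. The kinetic part gives $\int_{\mathbb{R}} |f'(x)|^2\,dx = \kappa$, since $|f'(x)|^2 = \kappa^2 e^{-2\kappa|x-x_k|}$ integrates to $\kappa$. For the potential part we have $f(x_k) = 1$ and $f(x_j) = e^{-\kappa|x_j - x_k|}$ for $j \neq k$, so that $\sum_j \alpha_j |f(x_j)|^2 = \alpha_k + \sum_{j \neq k} \alpha_j e^{-2\kappa|x_j - x_k|}$. Collecting the two contributions and using $\kappa + \alpha_k = \alpha_k/2$ together with $-2\kappa = \alpha_k$ yields
\[
a_{\sum_j \alpha_j \delta_{x_j}}(f,f) = \frac{\alpha_k}{2} + \sum_{x_j < x_k} \alpha_j e^{\alpha_k(x_k - x_j)} + \sum_{x_j > x_k} \alpha_j e^{\alpha_k(x_j - x_k)}.
\]

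Finally, I would observe that multiplying this expression by the strictly positive factor $e^{-\alpha_k x_k}$ reproduces exactly the left-hand side of the stated inequality: the factor turns $\frac{\alpha_k}{2}$ into $\frac{\alpha_k}{2}e^{-\alpha_k x_k}$, turns each $\alpha_j e^{\alpha_k(x_k - x_j)}$ (for $x_j < x_k$) into $\alpha_j e^{-\alpha_k x_j}$, and turns each $\alpha_j e^{\alpha_k(x_j - x_k)}$ (for $x_j > x_k$) into $\alpha_j e^{\alpha_k x_j - 2\alpha_k x_k}$, whose sum is $e^{-2\alpha_k x_k}\sum_{x_j > x_k}\alpha_j e^{\alpha_k x_j}$. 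Hence $a_{\sum_j \alpha_j \delta_{x_j}}(f,f)$ and the hypothesized expression differ only by the positive factor $e^{\alpha_k x_k}$ and therefore share the same sign, so the assumed inequality is equivalent to $a_{\sum_j \alpha_j \delta_{x_j}}(f,f) < 0$ and the proof concludes. There is no genuine obstacle beyond recognizing the exponential trial function; the only care required is the bookkeeping of the exponents and the remark that the overall factor $e^{-\alpha_k x_k}$ is positive and thus does not affect the sign.
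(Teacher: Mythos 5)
Your proof is correct and follows essentially the same route as the paper: the paper also uses the ground state of $-\Delta + \alpha_k\delta_{x_k}$ as trial function (in the normalization $e^{-\frac{\alpha_k}{2}x_k}e^{\frac{\alpha_k}{2}|x-x_k|}$, which is exactly your $f$ times the positive constant $e^{-\alpha_k x_k/2}$, so the stated expression appears directly without your final rescaling step) and concludes via $\sigma_{ess} = [0,\infty)$ that a negative form value forces a negative eigenvalue. The only cosmetic difference is that the paper evaluates $\int|f'|^2 + \alpha_k|f(x_k)|^2$ via the eigenvalue identity $\langle(-\Delta+\alpha_k\delta_{x_k})f,f\rangle = -\tfrac{\alpha_k^2}{4}\|f\|^2$ rather than by direct integration as you do.
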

\begin{proof}
The function
\begin{align*}
f(x) = \begin{cases}
e^{-\frac{\alpha_k}{2} x}, \quad x < x_k\\
e^{-\alpha_k x_k}e^{\frac{\alpha_k}{2}x}, \quad x > x_k
\end{cases}
\end{align*}
is an eigenfunction of $-\Delta + \alpha_k \delta_{x_k}$ to the eigenvalue $-\frac{\alpha_k^2}{4}$. Hence,
\begin{align*}
a_{\sum_j \alpha_j \delta_{x_j}}(f,f) &= \int_{\mathbb{R}} |f'(x)|^2 dx + \alpha_k |f(x_k)|^2 + \sum_{x_j < x_k} \alpha_j |f(x_j)|^2 + \sum_{x_j > x_k} \alpha_j |f(x_j)|^2\\
&= \langle (-\Delta + \alpha_k \delta_{x_k})f, f\rangle + \sum_{x_j < x_k} \alpha_j |f(x_j)|^2 + \sum_{x_j > x_k} \alpha_j |f(x_j)|^2\\
&= -\frac{\alpha_k^2}{4}\| f\|^2 + \sum_{x_j < x_k} \alpha_j |f(x_j)|^2 + \sum_{x_j > x_k} \alpha_j |f(x_j)|^2\\
&= \frac{\alpha_k}{2}e^{-\alpha_k x_k} + \sum_{x_j < x_k} \alpha_je^{-\alpha_k x_j} + e^{-2\alpha_k x_k} \sum_{x_j > x_k} \alpha_j e^{\alpha_k x_j} < 0.
\end{align*}
\end{proof}
\section{Numerical examples}
We want to continue with the presentation of two numerical examples. The following result will be useful to understand the examples, where $N_0(-\Delta + \mu)$ denotes the number of negative eigenvalues of the operator $-\Delta + \mu$ (counting multiplicities).
\begin{thm}[{\cite[Theorem 3.5]{Brasche2}}]\label{n0est}
Let $\mu = \mu_+ - \mu_-$ be a finite Radon measure on $\mathbb{R}$ with corresponding Hahn-Jordan decomposition. Then
\begin{align*}
N_0(-\Delta + \mu) \leq 1 + \frac{1}{2} \frac{\int_\mathbb{R}\int_\mathbb{R} |x-y| d\mu_{-}(x) d\mu_{-}(y)}{\mu_{-}(\mathbb{R})}.
\end{align*}
\end{thm}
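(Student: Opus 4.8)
For a finite signed Radon measure $\mu = \mu_+ - \mu_-$ on $\mathbb{R}$,
$$N_0(-\Delta+\mu) \leq 1 + \frac{1}{2}\frac{\int_\mathbb{R}\int_\mathbb{R}|x-y|\,d\mu_-(x)\,d\mu_-(y)}{\mu_-(\mathbb{R})}.$$

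Let me think about how to approach this bound on the number of negative eigenvalues.

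**Key observations.** First, the positive part $\mu_+$ only helps: $a_\mu(f,f) = \|f'\|^2 + \int |f|^2\,d\mu_+ - \int|f|^2\,d\mu_- \geq \|f'\|^2 - \int|f|^2\,d\mu_-$, so $-\Delta+\mu \geq -\Delta - \mu_-$ in the form sense, and hence $N_0(-\Delta+\mu) \leq N_0(-\Delta-\mu_-)$. So I can assume $\mu = -\mu_-$ is a purely negative measure, i.e. reduce to a purely attractive potential.

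**The strategy.** The quantity $\int\int |x-y|\,d\mu_-(x)d\mu_-(y)/\mu_-(\mathbb{R})$ is (twice) an average distance, and the kernel $|x-y|$ is intimately connected to the Green's function of $-\Delta$ on $\mathbb{R}$. This suggests the standard Birman–Schwinger approach: the number of eigenvalues of $-\Delta - \mu_-$ below energy $-\kappa^2$ equals the number of eigenvalues of the Birman–Schwinger operator $K_\kappa = \mu_-^{1/2}(-\Delta+\kappa^2)^{-1}\mu_-^{1/2}$ that exceed $1$. The free resolvent kernel in one dimension is $G_\kappa(x,y) = \frac{1}{2\kappa}e^{-\kappa|x-y|}$. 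Counting eigenvalues of $K_\kappa$ above $1$ is controlled by its trace: $N_0 \leq 1 + \#\{\text{eigenvalues} > 1\text{ strictly}\}$ and one estimates $\operatorname{tr}(K_\kappa - \text{projection})$ or uses a rank-one splitting. Letting $\kappa \to 0$, $G_\kappa(x,y) \to \frac{1}{2\kappa} - \frac{1}{2}|x-y| + O(\kappa)$; the divergent constant term $\frac{1}{2\kappa}$ is rank one (it is $\frac{1}{2\kappa}\langle\cdot,\mathbf{1}\rangle\mathbf{1}$ against $\mu_-^{1/2}$), which accounts for the leading $1$ in the bound, and the next term $-\frac12|x-y|$ produces exactly the claimed average-distance expression. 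This matches the structure of the bound perfectly.

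**Concrete plan.** First I would reduce to $\mu = -\mu_-$ via the form monotonicity above. Then, after possibly separating off a single trial direction (the constant function, giving the $+1$), I would set up the Birman–Schwinger principle and pass to the $\kappa \to 0^+$ limit, extracting the singular rank-one part of the resolvent kernel $\frac{1}{2\kappa}e^{-\kappa|x-y|}$. After projecting out that rank-one piece, the remaining operator has a trace bounded by $\frac12\int\int|x-y|\,d\mu_-(x)\,d\mu_-(y)$ (normalized by $\mu_-(\mathbb{R})$ from the normalization of the constant test direction), and the number of additional eigenvalues above $1$ is at most this trace. I expect the \emph{main obstacle} to be the careful handling of the $\kappa \to 0$ limit: the free resolvent is not Hilbert--Schmidt or trace class uniformly as $\kappa\to 0$ because $G_\kappa \sim \frac{1}{2\kappa}$ diverges, so one must correctly isolate the rank-one divergence (the projection onto $\mu_-^{1/2}$ relative to the total mass $\mu_-(\mathbb{R})$) before estimating the trace of the compact remainder, and justify that this rank-one subtraction costs exactly one eigenvalue in the count.
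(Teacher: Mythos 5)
The first thing to note is that the paper does not prove this statement at all: like Theorem \ref{essspectrum}, it is imported verbatim from the reference cited as \cite[Theorem 3.5]{Brasche2}. So there is no in-paper proof to compare your attempt against, only the standard argument in the literature, which is indeed the Birman--Schwinger route you chose.

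Your reduction $N_0(-\Delta+\mu)\le N_0(-\Delta-\mu_-)$ by form monotonicity is correct, the Birman--Schwinger operator $K_\kappa$ on $L^2(\mu_-)$ with kernel $\frac{1}{2\kappa}e^{-\kappa|x-y|}$ (Hilbert--Schmidt, since the kernel is bounded and $\mu_-$ is finite) is the right object, and the constant direction in $L^2(\mu_-)$ is indeed what produces the leading $1$. The genuine gap is in the step ``isolate the rank-one divergence, then bound the number of remaining eigenvalues above $1$ by the trace of the compact remainder.'' If you literally subtract the rank-one operator $R_\kappa=\frac{1}{2\kappa}\langle\,\cdot\,,\mathbf{1}\rangle\mathbf{1}$, the remainder $M_\kappa=K_\kappa-R_\kappa$ has kernel $\frac{1}{2\kappa}\bigl(e^{-\kappa|x-y|}-1\bigr)$, which vanishes on the diagonal: its trace is $0$, not the average-distance functional, and $M_\kappa$ is not a positive operator (its form evaluated at $\mathbf{1}$ is strictly negative), so the inequality ``number of eigenvalues above $1$ is at most the trace'' is simply not available for $M_\kappa$. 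The repair is to compress rather than subtract. Let $P$ be the orthogonal projection of $L^2(\mu_-)$ onto $\{\mathbf{1}\}^{\perp}$. Then: (i) by min--max, a codimension-one compression loses at most one eigenvalue, so $\#\{\lambda(K_\kappa)>1\}\le 1+\#\{\lambda(PK_\kappa P)>1\}$; (ii) $K_\kappa\ge 0$ because $e^{-\kappa|x-y|}$ is a positive definite function, hence $PK_\kappa P\ge 0$ and $\#\{\lambda(PK_\kappa P)>1\}\le\operatorname{tr}(PK_\kappa P)$; (iii) since $P\mathbf{1}=0$, the compression annihilates the divergent rank-one part, $PK_\kappa P=PM_\kappa P$, and the divergences cancel in the trace:
\begin{align*}
\operatorname{tr}(PK_\kappa P)
&=\operatorname{tr}(K_\kappa)-\frac{\langle K_\kappa\mathbf{1},\mathbf{1}\rangle}{\mu_-(\R)}
=\frac{1}{\mu_-(\R)}\int_\R\int_\R\frac{1-e^{-\kappa|x-y|}}{2\kappa}\,d\mu_-(x)\,d\mu_-(y)\\
&\le\frac{1}{2}\,\frac{\int_\R\int_\R|x-y|\,d\mu_-(x)\,d\mu_-(y)}{\mu_-(\R)},
\end{align*}
using $1-e^{-t}\le t$. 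Note that this bound is uniform in $\kappa>0$, so the delicate $\kappa\to 0^+$ kernel expansion that you single out as the main obstacle never has to be confronted: apply the Birman--Schwinger principle at each fixed $\kappa$ and use $N_0(-\Delta-\mu_-)=\sup_{\kappa>0}N_{(-\infty,-\kappa^2)}(-\Delta-\mu_-)$, which holds because negative eigenvalues can accumulate only at $0=\min\sigma_{ess}$. What remains of your sketch as honest technical debt is standard but should be acknowledged: the Birman--Schwinger principle must be justified for measure potentials (through the form $a_{-\mu_-}$ rather than operator square roots), and the trace identity above needs a Mercer-type argument (positive operator, continuous kernel, finite measure) or an explicit orthonormal-basis computation.
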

\subsubsection*{Square well potential}
The first example is well known, since it is essentially solvable. Hence, it will play the role of a test scenario.\\
Consider the operator $-\Delta - \chi_{[-1,1]}$, i.e. the Schr\"odinger operator with a square well potential. By Proposition \ref{theorem8} and Theorem \ref{n0est}, 
\[ 1 \leq N_0(-\Delta - \chi_{[-1,1]}) \leq 1 + \frac{2}{3}. \]
Further, the only negative eigenvalue $\lambda$ satisfies $-1 < \lambda < 0$ and the equation
\[ \tan \lambda = \frac{\sqrt{1 - \lambda^2}}{\lambda} \]
(see e.g. \cite[Chapter 7.4]{Gustafson}). By solving this equation numerically, one obtains $\lambda \approx -0.453753165860328$.\\
For $N \in \mathbb{N}$ we get the $N$th approximating operator $-\Delta + \sum_{j=1}^{2N} \alpha_j^N \delta_{x_j^N}$ given by
\begin{align*}
x_j^N &= -1 + \frac{j}{N}, \quad j = 1, \dots, 2N;\\
\alpha_j^N &= -\frac{1}{N}, \quad j=1, \dots, 2N.
\end{align*}
The negative eigenvalue for the $N$th approximating operator, found by the procedure described in section \ref{sec8} implemented in Matlab, can be seen in Table 1.
\begin{table}
\begin{center}
\begin{tabular}{|c|c|}
\hline N & Eigenvalue \\ \hline
1 & -0.545877203227244 \\ \hline
2 & -0.474617739449437 \\ \hline
3 & -0.462861650386081\\ \hline
4 & -0.458844821756164\\ \hline
5 & -0.457002447176188\\ \hline
10 & -0.454562375073084\\ \hline
25 & -0.453882500447814\\ \hline
50 & -0.453785494551346\\ \hline
75 & -0.453767533759611\\ \hline
100 & -0.453761247723386\\ \hline 
1000 & -0.453753246677936\\ \hline 
10000 & -0.453753166668506\\ \hline 
100000 & -0.453753165868416\\ \hline 
1000000 & -0.453753165860430\\ \hline \hline
$\lambda$ & -0.453753165860328\\ \hline
\end{tabular}
\end{center}
\caption{Approximation of eigenvalue of $-\Delta - \chi_{[-1,1]}$.}
\end{table}
\subsubsection*{Cantor measure potential}
The second example is supposed to show that our methods also work with respect to rather exotic potentials. Let $\mu_C$ be the Cantor measure, i.e. the measure which has the Cantor function as its cumulative distribution function. We want to find eigenvalues of the operator $-\Delta -\mu_C$. Using again Proposition \ref{theorem8} and Theorem \ref{n0est} one gets
\begin{align*}
1 \leq N_0(-\Delta - \mu_C) \leq 1 + \frac{1}{2},
\end{align*}
i.e. the operator has exactly one negative eigenvalue. While it is possible to approximate this eigenvalue of $-\Delta -\mu_C$ by the method described in section \ref{approximationofmeasures}, we modify the approach here to obtain better results for this particular case. As it is well known, the Cantor set can be obtained as the limit set of a procedure of removing middle third intervals. Inspired by this, one can construct the Cantor measure easily as the limit of a sequence of pure point measures as follows: Let $\mu_N$ be the measures defined by (cf. Figure 3)
\begin{align*}
\Lambda_0 &= \Big \{ \frac{1}{2} \Big \},\\
\Lambda_N &= \Big \{ \frac{x}{3}; x \in \Lambda_{N-1} \Big \} \cup \Big \{ 1 - \frac{x}{3}; x \in \Lambda_{N-1} \Big \}, \quad N \in \mathbb{N}\\
\mu_N &= \frac{1}{2^N}\sum_{x \in \Lambda_N} \delta_{x}, \quad N \in \mathbb{N}.
\end{align*}
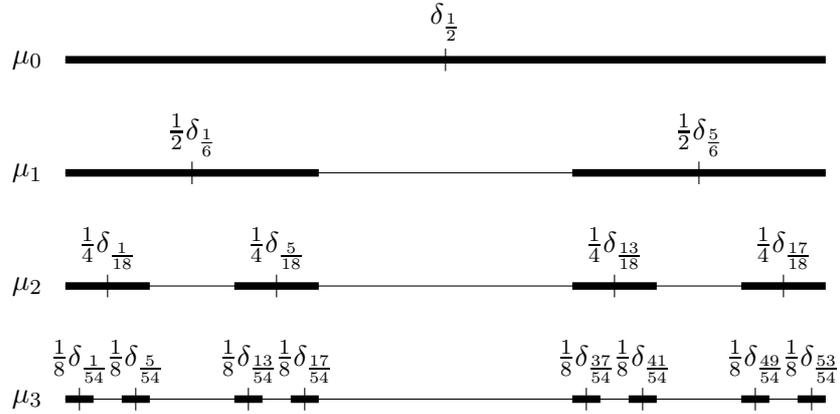
\begin{figure}
\begin{center}
\begin{tikzpicture}
\draw (0,0) -- (10,0);
\draw (0,-1.5) -- (10,-1.5);
\draw (0,-3) -- (10,-3);
\draw (0,-4.5) -- (10,-4.5);
\draw (-0.5,0) node {$\mu_0$};
\draw (-0.5,-1.5) node {$\mu_1$};
\draw (-0.5,-3) node {$\mu_2$};
\draw (-0.5,-4.5) node {$\mu_3$};
\draw (5,0.5) node {$\delta_{\frac{1}{2}}$};
\draw(5,0.15) -- (5,-0.15);
\draw (10/6,-1.5+0.15) -- (10/6,-1.5-0.15);
\draw (50/6,-1.5+0.15) -- (50/6,-1.5-0.15);
\draw (10/6,{-0.5-0.5}) node {$\frac{1}{2} \delta_{\frac{1}{6}}$};
\draw (10-10/6,{-0.5-0.5}) node {$\frac{1}{2} \delta_{\frac{5}{6}}$};
\draw (10/18,{-3+0.15}) -- (10/18,{-3-0.15});
\draw (50/18,{-3+0.15}) -- (50/18,{-3-0.15});
\draw (130/18,{-3+0.15}) -- (130/18,{-3-0.15});
\draw (170/18,{-3+0.15}) -- (170/18,{-3-0.15});
\draw (10/18,-2.5) node {$\frac{1}{4} \delta_{\frac{1}{18}}$};
\draw (20/9 + 10/18,-2.5) node {$\frac{1}{4} \delta_{\frac{5}{18}}$};
\draw (10 - 10/18,-2.5) node {$\frac{1}{4} \delta_{\frac{17}{18}}$};
\draw (10 - 20/9 - 10/18,-2.5) node {$\frac{1}{4} \delta_{\frac{13}{18}}$};
\draw (10/54,-4.5-0.15) -- (10/54,-4.5+0.15);
\draw (50/54,-4.5-0.15) -- (50/54,-4.5+0.15);
\draw (130/54,-4.5-0.15) -- (130/54,-4.5+0.15);
\draw (170/54,-4.5-0.15) -- (170/54,-4.5+0.15);
\draw (370/54,-4.5-0.15) -- (370/54,-4.5+0.15);
\draw (410/54,-4.5-0.15) -- (410/54,-4.5+0.15);
\draw (490/54,-4.5-0.15) -- (490/54,-4.5+0.15);
\draw (530/54,-4.5-0.15) -- (530/54,-4.5+0.15);
\draw (10/54,-4) node {$\frac{1}{8} \delta_{\frac{1}{54}}$};
\draw (10/54 + 20/27,-4) node {$\frac{1}{8} \delta_{\frac{5}{54}}$}; 
\draw (10/54 + 60/27,-4) node {$\frac{1}{8} \delta_{\frac{13}{54}}$};
\draw (10/54 + 80/27,-4) node {$\frac{1}{8} \delta_{\frac{17}{54}}$};
\draw (10 - 10/54,-4) node {$\frac{1}{8} \delta_{\frac{53}{54}}$};
\draw (10 - 10/54 - 20/27,-4) node {$\frac{1}{8} \delta_{\frac{49}{54}}$};
\draw (10 - 10/54 - 60/27,-4) node {$\frac{1}{8} \delta_{\frac{41}{54}}$};
\draw (10 - 10/54 - 80/27,-4) node {$\frac{1}{8} \delta_{\frac{37}{54}}$};
\draw[line width = 0.1cm] (0,0) -- (10,0);
\draw[line width = 0.1cm] (0,-1.5) -- (10/3,-1.5);
\draw[line width = 0.1cm] (20/3,-1.5) -- (10,-1.5);
\draw[line width = 0.1cm] (0,-3) -- (10/9,-3);
\draw[line width = 0.1cm] (20/9,-3) -- (10/3,-3);
\draw[line width = 0.1cm] (20/3,-3) -- (20/3 + 10/9,-3);
\draw[line width = 0.1cm] (20/3 + 20/9,-3) -- (10,-3);
\draw[line width = 0.1cm] (0,-4.5) -- (10/27,-4.5);
\draw[line width = 0.1cm] (20/27,-4.5) -- (30/27,-4.5);
\draw[line width = 0.1cm] (10/3 - 30/27,-4.5) -- (10/3 - 20/27,-4.5);
\draw[line width = 0.1cm] (10/3 - 10/27,-4.5) -- (10/3,-4.5);
\draw[line width = 0.1cm] (10,-4.5) -- (10 - 10/27,-4.5);
\draw[line width = 0.1cm] (10 - 20/27,-4.5) -- (10 - 30/27,-4.5);
\draw[line width = 0.1cm] (10 - 10/3 + 30/27,-4.5) -- (10 - 10/3 + 20/27,-4.5);
\draw[line width = 0.1cm] (10 - 10/3 + 10/27,-4.5) -- (10 - 10/3,-4.5);
\end{tikzpicture}
\end{center}
\caption{Construction of the measures $\mu_N$ for $N = 0, 1, 2, 3$. The thick black bars are the intervals in the $N$th step of the construction of the Cantor set.}
\end{figure}
It is easy to show that the $\mu_N$ converge to $\mu_C$ weakly for $N \to \infty$, e.g. by proving that the cumulative distribution functions converge pointwise. Hence, the negative eigenvalues of the operators $-\Delta -\mu_N$ converge to the eigenvalue of $-\Delta - \mu_C$. 
The numerically computed eigenvalues of $-\Delta - \mu_N$ can be seen, for some $N$, in Table 2. We will not give a full discussion of the error estimate of this modified method here, but we will compute $\hat{\mu}_C(t) - \hat{\mu}_N(t)$ in the appendix, which is the crucial step needed for deriving an error estimate (beside the general theory developed above).
\begin{figure}
\begin{center}
\begin{tabular}{|c|c|}
\hline N & Eigenvalue of $-\Delta - \mu_N$ \\ \hline
1 & -0.25 \\ \hline
2 & -0.190826516988754 \\ \hline
3 & -0.182601523317952\\ \hline
4 & -0.181236785438422\\ \hline
5 & -0.181005430450725\\ \hline
10 & -0.180958390783868\\ \hline
15 & -0.180958384580303\\ \hline
20 & -0.180958384579497\\ \hline
\end{tabular}
\end{center}
\caption{Approximation of eigenvalue of $-\Delta-\mu_C$.}
\end{figure}
\section{Extension to other 1-dimensional domains}
In this section we will discuss how to extend Theorem \ref{formconvergence} and our approximation method to Schr\"odinger operators on $L^2([0, \infty))$.  Essentially the same ideas work for Schr\"odinger operators on $L^2(0, 1)$ or to some extend even on finite metric graphs (cf. \cite{Berkolaiko} for an introduction to this topic), but we will only discuss the case $[0, \infty)$ in detail.\\
As it is well known, the self-adjoint realizations of the Laplacian on $L^2([0, \infty))$ can be parametrized by $\alpha \in [0, \pi)$ through the boundary conditions
\begin{align*}
D(-\Delta_\alpha) &= \{ f \in H^2([0, \infty)); \ \cos(\alpha) f(0) + \sin(\alpha)f'(0) = 0\},\\
-\Delta_\alpha f &= -f''.
\end{align*}
The sesquilinear form $a_\alpha^0$ associated to $-\Delta_\alpha$ is given by
\begin{align*}
D(a_\alpha^0) &= H^1([0, \infty))\\
a_\alpha^0(f,g) &= \int_{[0, \infty)} f'(x) \overline{g'(x)} dx - \cot(\alpha) f(0) \overline{g(0)}
\end{align*}
for $\alpha \neq 0$ and
\begin{align*}
D(a_0^0) &= H^1_0([0, \infty))\\
a_0^0(f,g) &= \int_{[0,\infty)} f'(x) \overline{g'(x)}dx.
\end{align*}
Throughout this section we assume that all occuring measures are finite Radon measures on $[0, \infty)$, i.e. finite signed measures on $\mathcal{B}([0, \infty))$. Without loss of generality we may assume that the measures have no point mass at 0 (this would only change the boundary condition and not the action of the resulting Schr\"{o}dinger operator). For such a measure $\mu$ we define the Schr\"odinger operator with boundary conditions corresponding to $\alpha \in [0, \pi)$ and potential $\mu$, $-\Delta_\alpha + \mu$, as the operator associated to the form $a_\alpha^\mu$
\begin{align*}
D(a_\alpha^\mu) &= D(a_\alpha^0),\\
a_\alpha^\mu (f,g) &= a_\alpha^0(f,g) + \int_{[0, \infty)} f \overline{g} d\mu
\end{align*}
in the sense of Kato's first representation theorem, where it can easily be seen that $a_\alpha^\mu$ is a small form perturbation of $a_{\pi/2}^0$ (or of $a_0^0$ if $\alpha = 0$) using the Sobolev inequality
\begin{align}
\forall f \in H^1([0, \infty)) \ \forall \varepsilon > 0: \| f\|_\infty^2 \leq \varepsilon \| f'\|_{L^2}^2 + \frac{4}{\varepsilon} \| f\|_{L^2}^2.
\end{align} Of course the term \textit{boundary conditions} should not be taken too serious in this setting - e.g. if $\mu$ has a singular continuous part in a neighbourhood of $0$, it is completely unclear if the functions in the domain actually fulfill the boundary condition. In what follows we say, in natural analogy to the case of measures on $\mathbb{R}$, that measures $\mu_n$ on $[0, \infty)$ converge weakly to the measure $\mu$ on $[0, \infty)$ if
\begin{align*}
\int_{[0, \infty)} f d\mu_n \to \int_{[0, \infty)} fd\mu
\end{align*}
for all bounded and continuous functions $f$ on $[0, \infty)$.
\begin{thm}
Let $\mu_n$, $n \in \mathbb{N}$ and $\mu$ be finite Radon measures on $[0, \infty)$ without point mass at $0$. If the $\mu_n$ converge weakly to $\mu$, then $-\Delta_\alpha + \mu_n$ converge to $-\Delta_\alpha + \mu$ in the norm resolvent sense.
\end{thm}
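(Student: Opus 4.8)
The plan is to mirror the proof of Theorem~\ref{formconvergence} on the half-line: first reduce the assertion to a quantitative statement about convergence of the associated sesquilinear forms, and then invoke the general principle of Section~3 (equivalently, \cite[Theorem VIII.25]{Reed1}) that convergence of closed lower semibounded forms with a common domain forces norm resolvent convergence of the associated self-adjoint operators.

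The decisive observation is that the boundary term drops out of the difference of forms. All forms $a_\alpha^{\mu_n}$ and $a_\alpha^{\mu}$ share the common domain $D(a_\alpha^0)$, and since the kinetic term and the boundary term $-\cot(\alpha)|f(0)|^2$ are independent of the measure, one has
\begin{align*}
a_\alpha^{\mu_n}(g,g) - a_\alpha^{\mu}(g,g) = \int_{[0,\infty)} |g|^2 \, d(\mu_n - \mu).
\end{align*}
Thus the form difference has exactly the same shape as in the line case, and the task is to show that
\begin{align*}
s_n := \sup_{g \in D(a_\alpha^0),\ (a_\alpha^{\mu})_{1-c}(g,g) = 1} \Big| \int_{[0,\infty)} |g|^2 \, d(\mu_n - \mu) \Big|
\end{align*}
tends to $0$, where $c$ is a common lower bound for all the forms.

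To control this supremum I would transfer the estimate of Proposition~\ref{formestimate} to the half-line via an extension operator. Extending $g \in H^1([0,\infty))$ to $\tilde g \in H^1(\R)$ by even reflection gives $\tilde g|_{[0,\infty)} = g$ and $\|\tilde g\|_{H^1(\R)}^2 = 2\|g\|_{H^1([0,\infty))}^2$; viewing $\mu_n$ and $\mu$ as finite Radon measures on $\R$ supported in $[0,\infty)$ (which remain weakly convergent, since every $f \in C_b(\R)$ restricts to an element of $C_b([0,\infty))$), Proposition~\ref{formestimate} yields
\begin{align*}
\Big| \int_{[0,\infty)} |g|^2 \, d(\mu_n - \mu) \Big| = \Big| \int_{\R} |\tilde g|^2 \, d(\mu_n - \mu) \Big| \leq 2 \|g\|_{H^1([0,\infty))}^2 \frac{2}{\sqrt\pi} \Big( \int_\R \frac{1}{1+t^2} |\hat\mu_n(t) - \hat\mu(t)|^2 \, dt \Big)^{1/2}.
\end{align*}
The Fourier integral on the right tends to $0$ by the same dominated-convergence argument as in Proposition~\ref{formestimate}, using pointwise convergence $\hat\mu_n \to \hat\mu$ together with the uniform bound $|\hat\mu_n - \hat\mu| \leq 2\sup_n |\mu_n|(\R)$. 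Combined with a half-line analogue of Lemma~\ref{lemma8}, which converts the $H^1$-ball supremum into the $(a_\alpha^{\mu})_{1-c}$-normalized supremum at the cost of a harmless constant, this gives $s_n \to 0$, and in particular $s_n < 1$ for large $n$.

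The step I expect to require the most care is the construction of a \emph{common} lower bound $c$ valid simultaneously for all $a_\alpha^{\mu_n}$, i.e. the half-line counterpart of Lemma~\ref{lemma8}. Weak convergence forces $\sup_n |\mu_n|([0,\infty)) < \infty$ by the uniform boundedness principle, and the Sobolev inequality makes the measure perturbation $g \mapsto \int |g|^2 \, d\mu_n$ infinitesimally form-bounded with respect to $\|g'\|_{L^2}^2$, uniformly in $n$; the boundary term $-\cot(\alpha)|g(0)|^2$ is $n$-independent and is likewise dominated by $\|g\|_\infty^2$ via the same inequality. This produces a single constant $c$ that is a lower bound for all $a_\alpha^{\mu_n}$ and for $a_\alpha^{\mu}$. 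Once $c$ and the estimate $s_n \to 0$ are available, shifting every operator by $1-c$ so that it is $\geq 1$ and applying the theorem of Section~3 gives $\|(-\Delta_\alpha + \mu_n + (1-c))^{-1} - (-\Delta_\alpha + \mu + (1-c))^{-1}\| \to 0$, which is the claimed norm resolvent convergence.
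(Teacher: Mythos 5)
Your proposal is correct and follows essentially the same route as the paper: reduce to convergence of the forms via a half-line analogue of Lemma~\ref{lemma8}, extend functions from $H^1([0,\infty))$ to $H^1(\R)$ with uniformly controlled norm, push the measures forward to $\R$, and invoke the known estimates there (Proposition~\ref{formestimate}) to conclude with the theorem of Section~3. The only difference is cosmetic: you extend by even reflection (constant $\sqrt{2}$), while the paper extends by a linear cutoff on $[-1,0]$ (constant $\sqrt{11/3}$); both do the same job.
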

\begin{proof}
Using the general theory, it suffices again to prove that the corresponding sesquilinear forms converge. Further, it can be shown in the same manner as on $\mathbb{R}$ (cf. Lemma \ref{lemma8} above and the proof in \cite{Brasche}) that we only need to show that
\begin{align*}
\sup_{f \in D(a_\alpha), \| f\|_{H^1([0, \infty))} \leq 1} \Big | \int_{[0, \infty)} |f|^2 d(\mu - \mu_n) \Big | \to 0
\end{align*}
for $n \to \infty$.
Now observe that each function $f \in H^1([0, \infty))$ with $\| f\|_{H^1([0, \infty))} \leq 1$ can be continued to a function $\widetilde{f} \in H^1(\mathbb{R})$ such that $\| \widetilde{f}\|_{H^1(\mathbb{R})} \leq c$, where $c$ is a constant independend of $f$. One way of achieving this is by setting
\begin{align*}
\widetilde{f}(x) = \begin{cases}
f(x), \quad x \geq 0\\
f(0)x + f(0), \quad -1 \leq x < 0\\
0, \quad x < -1
\end{cases}
\end{align*}
and $c = \sqrt{\frac{11}{3}}$:
\begin{align*}
\| \widetilde{f}\|_{H^1(\mathbb{R})}^2 &= \| f\|_{H^1([0, \infty))}^2 + \int_{-1}^0 |f(0)x + f(0)|^2 dx + \int_{-1}^0 |f(0)|^2 dx\\
&\leq 1 + \frac{4}{3}|f(0)|^2 \leq 1 + \frac{4}{3}\| f\|_\infty^2\\
&\leq 1 + \frac{8}{3} \| f\|_{H^1([0, \infty))}^2 = \frac{11}{3}. 
\end{align*}
For a measure $\nu$ on $[0, \infty)$ denote by $\nu^\ast$ the measure extended to $\mathbb{R}$ by setting it to $0$ on $(-\infty, 0)$. Then
\begin{align*}
\sup_{f \in D(a_\alpha), \| f\|_{H^1([0, \infty))} \leq 1} \Big | \int_{[0, \infty)} |f|^2 d(\mu - \mu_n) \Big | &= \sup_{f \in D(a_\alpha), \| f\|_{H^1([0, \infty))} \leq 1} \Big | \int_{\mathbb{R}} \big |\widetilde{f} \big |^2 d(\mu^\ast - \mu_n^\ast) \Big |\\
&\leq \frac{1}{c^2}\sup_{g \in H^1(\mathbb{R}), \| g\|_{H^1(\mathbb{R})} \leq 1} \Big | \int_{[0, \infty)} |g|^2 d(\mu^\ast - \mu_n^\ast) \Big |.
\end{align*}
As $\mu_n^\ast \to \mu^\ast$ weakly, this finishes the proof since the last quantitiy is known to converge to 0 by the results on $\mathbb{R}$.
\end{proof}
\begin{rem} a) Using the same idea, i.e. extending functions from $H^1([0, 1])$ uniformly to functions in $H^1(\mathbb{R})$, one can prove the corresponding theorem for Schr\"odinger operators on $L^2([0,1])$. Even further, by the same method one gets the same result for Schr\"odinger operators on arbitrary finite metric graphs.\\
b) Using a construction analogous to the one on $\mathbb{R}$, one can easily construct pure point measures $\mu_N$ on $[0, \infty)$ approximating a given finite Radon measure $\mu$ on $[0, \infty)$ weakly.
\end{rem}
It remains to discuss the method of computating the eigenvalues of $-\Delta_\alpha + \sum_{j=1}^n \beta_j \delta_{x_j}$ analogous to the method discussed in section \ref{sec8}. We may assume that the $x_j$ are ordered increasingly, $0 < x_1 < \hdots < x_n$, and $\beta_j  \in \mathbb{R}$. One can check that the operator $-\Delta_\alpha + \sum_j \beta_j \delta_{x_j}$ acts as $f \mapsto -f''$ a.e. and that the functions in its domain fulfill the boundary conditions $\cos(\alpha)f(0) + \sin(\alpha) f'(0) = 0$ and the usual continuity and $\delta$ boundary conditions at all $x_j$. Then, each eigenfunction $f_\lambda$ for an eigenvalue $\lambda < 0$ has the form
\begin{align*}
f_\lambda(x) = a_j e^{\sqrt{-\lambda}x} + b_j e^{-\sqrt{-\lambda}x}, \quad x_{j-1} < x < x_j
\end{align*}
(with $x_0 = 0$ and $x_{n+1} = \infty$) for coefficients $a_j, b_j \in \mathbb{R}, j=0, \dots, n$. As in the case on the whole line, such a function is an eigenfunction (and hence $\lambda$ an eigenvalue) if and only if $a_n = 0$ and it fulfills all boundary conditions. Therefore, let us consider the case $a_0 \neq 0$, from which we may assume $a_0 = 1$. Then the boundary conditions at $0$ uniquely determine $b_0$. We may now continue as we know it from section \ref{sec8}, computing iteratively the values of $a_j$ and $b_j$ from the values of $a_{j-1}, b_{j-1}$ and the boundary and continuity conditions at $x_j$. At the end, we get $a_n = a_n(\lambda)$ as a continuous function of $\lambda$ and only need to find zeroes of this continuous real-valued function.\\
We still need to deal with the case $a_0 = 0$. It is easy to check that for $\alpha = 0, \frac{\pi}{2}$ this never leads to a valid eigenfunction (and hence not to an eigenvalue). Further, if $n = 1$ one can also check that this never results in an eigenvalue. It is unknown to the authors if this case may lead to a negative eigenvalue of $-\Delta_\alpha + \sum_j \beta_j \delta_{x_j}$ for any choice of $\beta_j$ and $x_j$. Anyway, this is not a problem: If $\alpha \neq 0, \frac{\pi}{2}$, then we just consider the case $a_0 = 0, b_0 = 1$ separately. From the boundary conditions, one imediately gets $\lambda = -(\cot(\alpha))^2$, hence we only need to check one more value of $\lambda$. Iterating now over all $x_j$, one can quickly check if this $\lambda$ is an eigenvalue (again, by checking if $a_n(\lambda) = 0$).
\begin{rem}
a) With the same idea, it is possible to find eigenvalues of $-\Delta + \sum_j \alpha_j \delta_{x_j}$ on $L^2([0,1])$ (with suitable boundary conditions). Of course on can (and will) in this case also get eigenvalues $\geq 0$. Here, one also has to use a suitable Ansatz for the eigenfunctions of non-negative eigenvalues.\\
b) This approach to find eigenvalues of $-\Delta + \sum_j \alpha_j \delta_{x_j}$ will not directly work on most metric graphs. But on certain classes of metric graphs, there are substitutional methods available for computing eigenvalues of Laplacians (and eigenvalues of $-\Delta + \sum_j \alpha \delta_{x_j}$ are just eigenvalues of a Laplacian on a metric graph with a few more vertices). If such a method exists, our method for approximating eigenvalues of $-\Delta + \mu$ and the error estimates work as well.
\end{rem}
\section*{Appendix: Cantor measure and Fourier transform}
Let $\mu_C$ be the Cantor measure and
\begin{align*}
\mu_N = \frac{1}{2^N} \sum_{x \in \Lambda_N} \delta_x
\end{align*}
be the $N$th approximating measure, as described above. We want to compute $\hat{\mu}_C(t) - \hat{\mu}_N(t)$. Denoting
\begin{align*}
S_N = \Big \{ \frac{1}{2} + \sum_{j=1}^N \sigma_j \frac{1}{3^j}; \sigma \in \{ -1, 1\}^N \Big \}
\end{align*}
we are first going to show that $\Lambda_N = S_N$ for $N= 1, 2, 3, \dots$. This will follow by induction. For $N = 1$ the relation is obvious. Hence, assume that $S_N = \Lambda_N$. It suffices to prove $\Lambda_{N+1} \subset S_{N+1}$, since clearly $|\Lambda_{N+1}| = 2^{N+1} = |S_{N+1}|$. Let $x \in \Lambda_{N+1}$. Then either \begin{align*}
x = \frac{y}{3} \quad \text{ or } x = 1 - \frac{y}{3}
\end{align*}
for some $y \in \Lambda_N$. Assume the first case is true (the other case can be dealt with in the same way). Then, since $\Lambda_N = S_N$, for some $\sigma \in \{ -1, 1 \}^N$
\begin{align*}
x &= \frac{\frac{1}{2} + \sum_{j=1}^N \sigma_j \frac{1}{3^j}}{3} = \frac{1}{6} + \sum_{j=1}^N \sigma_j \frac{1}{3^{j+1}}\\
&= \frac{1}{2} - \frac{1}{3} + \sum_{j=1}^N \sigma_j \frac{1}{3^{j+1}} = \frac{1}{2} + \sum_{j=1}^{N+1} \tilde{\sigma}_j \frac{1}{3^j} \in S_{N+1}
\end{align*}
with
\begin{align*}
\tilde{\sigma} = (-1, \sigma_1, \sigma_2, \dots, \sigma_N) \in \{ -1, 1\}^{N+1}
\end{align*}
and therefore $x \in  S_{N+1}$.\\
In what follows, we will need the following trigonometric identity:
\begin{align} \label{cosidentity}
\prod_{j=1}^N \cos(\varphi_j) = \frac{1}{2^N}\sum_{\sigma \in \{ -1, 1\}^N} \cos(\sigma_1 \varphi_1 + \dots + \sigma_N \varphi_N), \quad \text{ for } \varphi_j \in \mathbb{R}, \ j = 1, \dots, N.
\end{align}
The case $N = 2$ of this identity is a direct consequence of the angle sum identity for cosine, the general case follows easily by induction.\\
Now we will compute the Fourier transform of $\mu_N$:
\begin{align*}
\hat{\mu}_N(t) &= \frac{1}{2^N}\sum_{x \in S_N} \hat{\delta}_x(t) = \frac{1}{2^N} \sum_{x \in S_N} e^{itx}\\
&= \frac{1}{2^N}e^{\frac{1}{2}it} \sum_{\sigma \in \{ -1, 1\}^N} e^{it(\sigma_1 \frac{1}{3} + \sigma_2 \frac{1}{3^2} + \dots + \sigma_N \frac{1}{3^N})}\\
&= \frac{1}{2^N}e^{\frac{1}{2} it} \sum_{\sigma \in \{-1,1\}^N, \sigma_N = 1} e^{it (\sigma_1 \frac{1}{3} + \sigma_2 \frac{1}{3^2} + \dots + \sigma_N \frac{1}{3^N})} + e^{-it(\sigma_1 \frac{1}{3} + \sigma_2 \frac{1}{3^2} + \dots +  \sigma_N \frac{1}{3^N})}\\
&= \frac{1}{2^{N-1}}e^{\frac{1}{2}it} \sum_{\sigma \in \{-1, 1\}^{N}, \sigma_N = 1} \cos \Big (t(\sigma_1 \frac{1}{3} + \sigma_2 \frac{1}{3^2} + \dots + \sigma_{N-1} \frac{1}{3^{N-1}} + \sigma_N \frac{1}{3^N}) \Big )\\
&= \frac{1}{2^{N-1}} e^{\frac{1}{2}it} \sum_{\sigma \in \{-1, 1\}^{N}, \sigma_N = 1} \frac{1}{2} \Big (\cos \big (t(\sigma_1 \frac{1}{3} + \sigma_2 \frac{1}{3^2} + \dots + \sigma_{N-1} \frac{1}{3^{N-1}} + \sigma_N \frac{1}{3^N}) \big ) \\
&\quad \quad \quad \quad \quad \quad \quad \quad \quad \quad \quad + \cos \big (-t(\sigma_1 \frac{1}{3} + \sigma_2 \frac{1}{3^2} + \dots + \sigma_{N-1} \frac{1}{3^{N-1}} + \sigma_N \frac{1}{3^N}) \big ) \Big )\\
&= \frac{1}{2^N} e^{\frac{1}{2}it} \sum_{\sigma \in \{-1,1\}^N} \cos \Big (\sigma_1 \frac{t}{3} + \sigma_2 \frac{t}{3^2} + \dots + \sigma_N \frac{t}{3^N} \Big )\\
&= e^{\frac{1}{2}it} \prod_{j=1}^N \cos\Big (\frac{t}{3^j} \Big).
\end{align*}
Here we used formula (\ref{cosidentity}) in the last step. Summarizing, we get
\begin{align*}
\hat{\mu}_N(t) = e^{\frac{1}{2}it} \prod_{j=1}^N \cos \Big (\frac{t}{3^j} \Big ).
\end{align*}
Since the $\mu_N$ converge weakly to $\mu_C$, the Fourier transforms converge pointwise. We get the well-known result 
\begin{align*}
\hat{\mu}_C(t) = e^{\frac{1}{2}it} \prod_{j=1}^\infty \cos \Big ( \frac{t}{3^j} \Big ).
\end{align*}
Observe that it seems, in light of the form of the Fourier transforms, that the most natural way to construct the Cantor measure actually may be as the limit of the measures $\mu_N$. For the difference $\hat{\mu}_C(t) - \hat{\mu}_N(t)$ we get
\begin{align*}
\hat{\mu}_C(t) - \hat{\mu}_N(t) = e^{\frac{1}{2}it} \prod_{j=1}^N \cos \Big ( \frac{t}{3^j} \Big ) \Big (  \prod_{j={N+1}}^\infty \cos \Big ( \frac{t}{3^j} \Big ) -1 \Big )
\end{align*}
which converges fastly to 0 uniformly on compact intervals.

\begin{tabular}{l l}
Johannes F. Brasche & Robert Fulsche\\
\href{mailto:johannes.brasche@tu-clausthal.de}{\Letter johannes.brasche@tu-clausthal.de} & \href{mailto:fulsche@math.uni-hannover.de}{\Letter fulsche@math.uni-hannover.de}\\
Institut f\"{u}r Mathematik & Institut f\"{u}r Analysis\\
Technische Universit\"{a}t Clausthal & Leibniz Universit\"{a}t Hannover\\
Erzstra\ss e 1 & Welfengarten 1\\
38678 Clausthal-Zellerfeld & 30167 Hannover\\
Germany & Germany
\end{tabular}

\bibliographystyle{amsplain}
\bibliography{References}
\end{document}